\newtheorem{theorem}{Theorem}
\newtheorem{observation}[theorem]{Observation}
\newtheorem{Lemma}[theorem]{Lemma}
\def\be{\begin{equation}}
\def\ee{\end{equation}}
\def\ba{\begin{array}}
\def\ea{\end{array}}
\def\tr{\mathrm{tr}}
\def\k{{\bf k}}
\newcommand{\bea}{\begin{eqnarray}}
\newcommand{\eea}{\end{eqnarray}}
\def\bi{\begin{itemize}}
\def\ei{\end{itemize}}
\def\bc{\begin{center}}
\def\ec{\end{center}}
\def\C{\hbox{$\mit I$\kern-.7em$\mit C$}}
\def\R{\hbox{$\mit I$\kern-.6em$\mit R$}}
\def\N{\hbox{$\mit I$\kern-.6em$\mit N$}}
\def\ket#1{|#1\rangle}
\newcommand{\one}{\mbox{$1 \hspace{-1.0mm}  {\bf l}$}}
\def\tr{\mathrm{tr}}
\def\ket#1{\left| #1\right>}
\def\bra#1{\left< #1\right|}
\newcommand{\proj}[1]{\ket{#1}\bra{#1}}
\newcommand{\kb}[2]{\ket{#1}\bra{#2}}
\begin{document}

\title{Structure of dimension-bounded temporal correlations}

\author{Yuanyuan Mao}
\affiliation{Naturwissenschaftlich-Technische Fakult{\"a}t, Universit{\"a}t Siegen, Walter-Flex-Stra\ss e 3, 57068 Siegen, Germany}
\author{Cornelia Spee}
\affiliation{Institute for Quantum Optics and Quantum Information (IQOQI), Austrian Academy of Sciences, Boltzmanngasse 3, 1090 Vienna, Austria}
\affiliation{Naturwissenschaftlich-Technische Fakult{\"a}t, Universit{\"a}t Siegen, Walter-Flex-Stra\ss e 3, 57068 Siegen, Germany}
\author{Zhen-Peng Xu}
\affiliation{Naturwissenschaftlich-Technische Fakult{\"a}t, Universit{\"a}t Siegen, Walter-Flex-Stra\ss e 3, 57068 Siegen, Germany}
\author{Otfried G{\"u}hne}

\affiliation{Naturwissenschaftlich-Technische Fakult{\"a}t, Universit{\"a}t Siegen, Walter-Flex-Stra\ss e 3, 57068 Siegen, Germany}

\begin{abstract}

We analyze the structure of the space of temporal correlations 
generated by quantum systems. We show that the temporal correlation 
space under dimension constraints can be nonconvex. For the general 
case, we provide the necessary and sufficient dimension of a quantum 
system needed to generate a convex correlation space for a given scenario. 
We further prove that this dimension coincides with the dimension 
necessary to generate any point in the temporal correlation polytope. 
As an application of our results, we derive nonlinear inequalities 
to witness the nonconvexity for qubits and qutrits in the simplest 
scenario, and present an algorithm which can help to find the minimum 
for a certain type of nonlinear expressions under dimension constraints. 
\end{abstract}

\maketitle

\textit{Introduction.---} 
States of a quantum system are mathematically described by vectors in a 
Hilbert space. When no {\it a priori} information about the measurements 
or the states is known, one of the intrinsic properties we can possibly 
tell about an unknown quantum system is the dimension of its underlying 
Hilbert space. The dimension is considered as a valuable resource from 
an information-theoretical viewpoint \cite{Hol73,LBA09,AGM06}. 
Higher-dimensional quantum systems have been proven to be able to perform 
better in some tasks like quantum key distribution \cite{BPT00, CBK02} 
and so they can be used to implement more powerful protocols than 
lower-dimensional quantum systems \cite{EFK18}. 

But what can be concluded, if the dimension is limited? For instance, 
in the semi-device-independent framework of quantum information 
processing, nothing else but the dimension of the quantum system is 
assumed \cite{PaB11,LVB11}. The system is then measured in different 
experimental configurations and the statistics of the outcomes, usually 
referred to as quantum correlations, are recorded. The typical example
of this scenario is the Bell test which proves that quantum mechanics is 
nonlocal. The resulting spatial correlations play a central role in many 
quantum information protocols, such as quantum key distribution and randomness 
certification \cite{PaB11,LPY12}.  Preceding works studied the space 
of quantum correlations arising from quantum systems of different 
dimensions in many scenarios \cite{BPA08,BNV13,GBH10,NFA15,BBP15}, 
with various techniques using convex optimization designed to find 
the bound of some linear functionals of the correlations achievable 
with a given dimension \cite{NaV15,TRR19}. In the Bell scenario, 
however, the sets of correlations  arising from dimension-bounded Hilbert 
spaces are typically nonconvex \cite{SVW16,DoW15,BQB14}. Hence what 
linear functionals characterize  are essentially the convex hulls 
of correlations sets, rather than correlation sets themselves. Besides, 
some of these Bell-type dimension tests have recently been critically 
investigated, as they may not characterize the experimentally relevant 
figures of merit \cite{CCB17,KRB18}.

\begin{figure}[t]
\includegraphics[width=0.9\columnwidth]{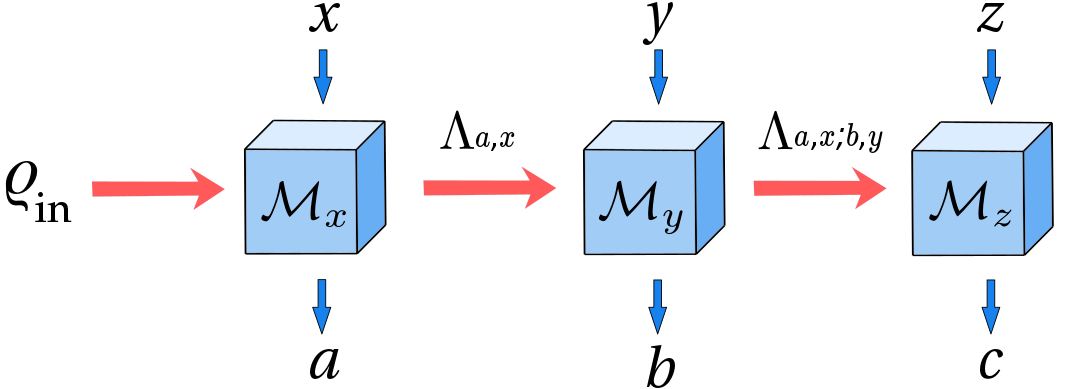}
\caption{A quantum system with initial state $\rho_{\rm in}$ is measured several times, the measurements can be repeated. The output state after each measurement will be subjected to a quantum dynamics which may depend on the prior choice of measurements and the outcome of the measurements. In this figure we depict this scenario for $L=3$.}
\label{fig1}
\end{figure}

In this paper we consider a different model, where measurements are performed in a temporal sequence \cite{BKM15,BMK13,Zuk14,HSG18,SSG18}, instead 
of spatial correlations investigated in Bell tests. The resulting temporal correlations can be
used to violate Leggett-Garg inequalities \cite{lg1,lg2}, proving quantum mechanics is not a theory of macroscopic realism.
We study the structure of temporal quantum correlations generated by dimension-bounded systems. First we will prove that already for the simplest scenario, the correlation spaces obtained by qubit or qutrit systems are nonconvex, and we provide nonlinear witnesses detecting this nonconvexity. Namely, they can distinguish quantum systems with different dimensions even if the convex hulls of the correlation spaces are the same. For general scenarios, we give a formula for the necessary and sufficient dimension of quantum systems, from which a convex set of temporal correlations can be obtained. As an application, we show that our nonconvexity witnesses are also qualitatively better dimension witnesses than linear ones. In order to derive nonlinear inequalities able to test higher dimensions, we present an iterative algorithm which allows us optimize a certain type of temporal correlation polynomials over dimension-bounded Hilbert spaces.

\textit{The space of temporal correlations.---} 
As illustrated in Fig. \ref{fig1}, a single system prepared in an initial 
state $\rho_{\rm in}$ is subjected to a sequence of measurements of certain length $L$. At each time step, a measurement selected from a given set 
$\{\mathcal{M}_0,\mathcal{M}_1,\ldots,\mathcal{M}_{S-1}\}$ is performed according to the input from an input alphabet $\mathcal{X}=\{0,1,\ldots,S-1\}$, and after each measurement an output from an alphabet $\mathcal{A}=\{0,1,\ldots,O-1\}$ is obtained. No assumption on the type of measurements will be imposed. In between two measurements we allow for an arbitrary quantum dynamics,
which may depend on the former choice of measurements and the measurement 
outcomes.  Given an initial state $\rho_{\rm in}$, one obtains a probability distribution $p(ab\cdots|xy\cdots)$ for any input sequence $xy\cdots$. We call the collection of the probability distributions generated by all possible inputs a temporal correlation. As a result of causality, the 
choice of latter measurements can not affect the outcomes of former measurements. Hence, the temporal correlations have to fulfill the arrow of time
(AoT) constraints \cite{ClK16}. For a two-step process, the constraints read
\begin{equation}
\sum_b p(ab|xy) =\sum_b p(ab|xy'), 
\end{equation}
for all $a,b \in \mathcal{A}, ~ x,y,y' \in \mathcal{X}.$
If there is no further assumption on the dimension of the quantum system, 

for any given $L$, $S$, and $O$, the temporal correlations form a polytope 
denoted by $P^L_{S,O}$ \cite{ClK16}. The extreme points of this polytope 
are the deterministic assignments, where each measurement has a fixed 
outcome and the AoT constraints are fulfilled \cite{AGC16,HSG18}. 
A correlation $\{p(abc\ldots|xyz\ldots)\}$ is in the temporal correlation 

polytope if and only if it can be decomposed 
as
\begin{equation}
p(abc\ldots|xyz\ldots)=p(a|x)p(b|a,xy)p(c|ab,xyz)\ldots,
\label{decomp}
\end{equation}
with $p(a|x),p(b|a,xy),p(c|ab,xyz),\ldots$ denoting the local probability 
distribution where the measurement choice and their outcomes in the preceding time steps are fixed \cite{HSG18}. It has been shown that any correlation obeying the
AoT condition can be reached in quantum mechanics \cite{Fri10,HSG18}, in contrast to the non-signaling polytope in the Bell scenario \cite{PoR94}, 
where not all the points can be realized.

\textit{Nonconvexity in the simplest case.---} 
The most basic experimental setup is to measure an uncharacterized quantum system twice, producing binary strings $ab \in \{0,1\}^{\otimes 2}$. The performed measurements are chosen from a set of two-outcome measurements $\{\mathcal{M}_0,\mathcal{M}_1\}$, based on the input string $xy\in \{0,1\}^{\otimes 2}$. Qubits can already be distinguished from higher-dimensional systems with this simple setup, since one can reach all the extreme 
points of the polytope by using qutrits, but not qubits \cite{HSG18}. Moreover, as we prove below, the set of quantum correlations generated by a qubit is not convex. For example, the two extreme points of the correlation polytope
\begin{equation}
\begin{split}
p_1& : p(10|00)=p(10|01)=p(01|10)=p(00|11)=1,\\
p_2& : p(10|00)=p(10|01)=p(10|10)=p(10|11)=1,
\end{split}
\end{equation}
can be attained by measuring a single qubit \cite{HSG18}. Nevertheless, the mixture of both, $p_m = \frac{p_1+p_2}{2}$, can not be achieved by a 
qubit. 
This can be seen as follows: In order to realize the correlation $p_m$, both measurements $\mathcal{M}_0$  and $\mathcal{M}_1$ have to be able to give each of the two results. Moreover, measuring $\mathcal{M}_0$ in the first step gives result "1" with certainty and in the second step if $\mathcal{M}_0$ was measured in the first step, it produces result "0" with certainty. This means both of its effects have to be projective operators. 
Without loss of generality, we denote the initial state by $|1\rangle$. Then the measurement $\mathcal{M}_0$ is measuring the observable $\sigma_z$, and the intermediate state after choosing $\mathcal{M}_0$ as first measurement is precisely $|0\rangle$. Based on the observation that measuring $\mathcal{M}_1$ on state $|0\rangle$ always gives outcome "0", we can tell that the effect of $\mathcal{M}_1$ corresponding to outcome "0" is of 
the form $|0\rangle\langle 0|+\epsilon |1\rangle\langle 1|$, with $\epsilon \in [0,1)$. If we measure $\mathcal{M}_1$ twice, the second step will 
give outcome "0" with certainty, which indicates that the intermediate state after measuring $\mathcal{M}_1$ is also the $|0\rangle$. However, in this case the probability $p(01|10)$ vanishes, which contradicts $p(01|10)=1/2$.

Besides case to case analysis, the nonconvexity can also be detected by nonlinear inequalities:
\begin{observation}
For correlations resulting from arbitrary measurements on a qubit, it holds that
\begin{equation}
\begin{split}
\mathcal{S}_1=&2p(0|0)+p(0|0,00)+2p(0|1)\\
+&p(0|0,11)+p(1|0,10)p(1|0,01) \underset{d=2}{\leq} 6.
\end{split}
\label{eq1}
\end{equation}
\end{observation}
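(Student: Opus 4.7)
My plan is to reduce the statement to an explicit two-parameter optimization on the Bloch sphere for projective qubit measurements, which I can then solve analytically. First I would set up the problem: write the initial state as $|\psi\rangle\langle\psi|$ (pure by convexity of $\mathcal{S}_1$ in $\rho$) and describe each measurement $\mathcal{M}_x$ by a quantum instrument with effects $E_x^0, E_x^1$ and post-measurement Kraus operators, absorbing any intermediate dynamics into the instruments. Each of the six probabilities appearing in $\mathcal{S}_1$ is then a polynomial in these data.

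The key reduction step is to argue that the supremum of $\mathcal{S}_1$ is attained when both $\mathcal{M}_0$ and $\mathcal{M}_1$ are projective. One has the elementary bounds $p(0|0) \le \lambda_{\max}(E_0^0)$ and $p(0|0,00) \le \lambda_{\max}(E_0^0)$, so $2p(0|0) + p(0|0,00) \le 3\lambda_{\max}(E_0^0)$ and analogously $2p(0|1) + p(0|0,11) \le 3\lambda_{\max}(E_1^0)$. Hence any unsharpness in the effects strictly lowers the linear part of $\mathcal{S}_1$, while the nonlinear term satisfies $p(1|0,10)p(1|0,01) \le 1$. Showing that the latter cannot compensate the former is the main technical obstacle; I would approach it either perturbatively around the projective optimum, or via a direct case analysis of rank-one Kraus decompositions of extremal qubit instruments, exploiting that on a qubit one cannot invoke Naimark dilation without violating the dimension constraint.

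With the projective reduction in place, I parameterize on the Bloch sphere with state vector $\vec{r}$ and effect directions $\vec{n}_0, \vec{n}_1$, and by rotational symmetry of $\mathcal{S}_1$ take all three vectors coplanar: $\vec{n}_0$ at angle $0$, $\vec{n}_1$ at angle $\theta$, $\vec{r}$ at angle $\phi$. A direct computation then gives $p(0|0,00) = p(0|0,11) = 1$, $p(0|0)=\cos^2(\phi/2)$, $p(0|1)=\cos^2((\phi-\theta)/2)$, and $p(1|0,10) = p(1|0,01) = \sin^2(\theta/2)$, so that
\begin{equation}
\mathcal{S}_1 = 4 + 2\cos(\phi - \theta/2)\cos(\theta/2) + \sin^4(\theta/2).
\end{equation}
Maximizing over $\phi$ at $\phi = \theta/2$ leaves $g(t) = 2\cos t + \sin^4 t$ with $t = \theta/2$. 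The derivative $g'(t) = 2\sin t(2\sin^2 t\cos t - 1)$ is nonpositive on $[0,\pi]$ because $\max_t 2\sin^2 t\cos t = 4/(3\sqrt{3}) < 1$, attained at $\tan^2 t = 2$; hence $g$ decreases monotonically from $g(0) = 2$, giving $\mathcal{S}_1 \le 6$ with equality iff both measurements coincide and $|\psi\rangle$ is their common $+1$ eigenvector, consistent with the extremal configuration already identified in the case-by-case argument preceding the Observation.
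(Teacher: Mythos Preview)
Your argument has a genuine gap: in the scenario under study, the post-measurement state after obtaining outcome $0$ from $\mathcal{M}_x$ is \emph{not} the L\"uders state $|n_x\rangle\langle n_x|$. The paper explicitly allows arbitrary dynamics between the two time steps, so the intermediate states $\rho_0$ (after $\mathcal{M}_0$, outcome $0$) and $\rho_1$ (after $\mathcal{M}_1$, outcome $0$) are free qubit states, independent of the measurement directions. Your identities $p(0|0,00)=p(0|0,11)=1$ and $p(1|0,10)=p(1|0,01)=\sin^2(\theta/2)$ tacitly fix $\rho_0=|n_0\rangle\langle n_0|$ and $\rho_1=|n_1\rangle\langle n_1|$; this choice is not optimal for generic $\theta$. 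For instance at $\theta=\pi/2$ (keeping $\alpha_0=0$), the contribution $p(0|0,11)+p(1|0,10)p(1|0,01)$ is strictly increased by moving $\rho_1$ away from $|n_1\rangle$ toward the eigenvector of $|n_1\rangle\langle n_1|+\tfrac12\,|n_0^\perp\rangle\langle n_0^\perp|$, so the value you compute there is not the qubit maximum at that $\theta$. Hence your one-variable function $g(t)=2\cos t+\sin^4 t$ is only a lower bound on $\max_{d=2}\mathcal S_1$ at fixed $\theta=2t$, and its monotone decrease does not by itself prove $\mathcal S_1\le 6$.

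The paper's proof treats exactly this freedom: after the same reduction to projective effects and pure states (via linearity in the Bloch parameters), it keeps $\vec\alpha_0,\vec\alpha_1$ as independent unit vectors in the plane of $\vec c,\vec d$, optimizes out $\vec\alpha_{\rm in}=(\vec c+\vec d)/|\vec c+\vec d|$, and is then left with a genuine three-parameter problem in $(x_0,x_1,x_3)$ (intermediate-state angles and half the measurement angle). The bulk of the work is an algebraic case analysis of the stationarity equations, showing that every interior critical point gives $\mathcal S_1<6$ and the boundary gives at most $6$. Your projective/coplanar reduction is fine, but to complete the argument you must retain the two intermediate-state angles as variables and carry out (some version of) this three-parameter optimization.
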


Here $p(b|a,xy)=p(ab|xy)/p(a|x)$ denotes the probability of obtaining the outcome "b" when measuring the measurement $\mathcal{M}_y$ in the second time step, given that the measurement $\mathcal{M}_x$ was measured in the first time step, and outcome "a" was obtained. The proof of Eq.~(\ref{eq1}) is presented in the Appendix A, wherein also an example of non-convexity detected by Eq.~(\ref{eq1}) is given. In this example, both extreme points we consider are achievable by a qubit, but the uniform mixture of them violates the inequality as demonstrated in Fig.~\ref{fig2}. The maximal value $\mathcal{S}_1=7$ can be achieved by an extreme point of the polytope, which
corresponds to a qutrit system \cite{HSG18}.

\begin{figure}
\centering     
\includegraphics[width=0.8\columnwidth]{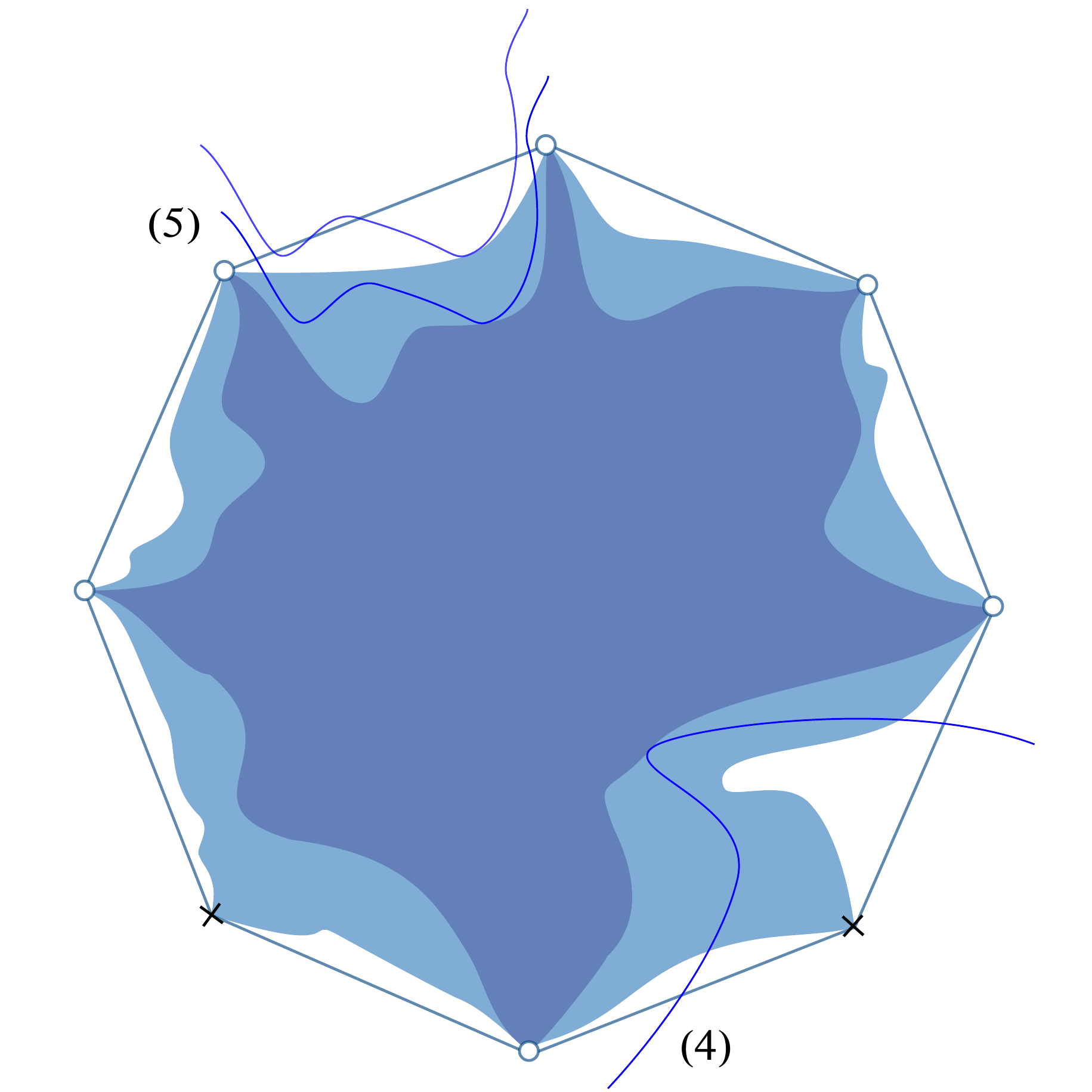}
\caption{Schematic illustration of the temporal correlation space in the simplest case. The octagon denotes the temporal correlation polytope, the 
darker area is the temporal correlation space generated by a qubit, and the lighter area denotes the temporal correlations that can be reached by a qutrit, but not a qubit. We label the extreme points achievable by a qubit with circles and other extreme points with crosses. The curve in the bottom describes (\ref{eq1}), whose maximum is achieved by an extreme point which can not realized by qubit. The algebraic maximum of inequality (\ref{eq2}), described by the double curves at the upper left, is achieved 
by the uniform mixture of two extreme points that are achievable by qubits.} 
\label{fig2}
\end{figure}

In the simplest scenario $L=S=O=2$ all the extreme points are already 
achievable by qutrits, so linear dimension witnesses could not distinguish 
qutrits from higher-dimensional quantum systems. Still, nonlinear
criteria can do that, as the following inequality shows:

\begin{observation}
For arbitrary measurements we have that
\begin{align}
\mathcal{S}_2=&p(0|0,00)+p(0|0,01)+p(0|0,10)+p(1|1,00)
\label{eq2}
\\
&+p(1|1,10)+p(1|1,11) \nonumber \\
&+p(1|0,11)+p(0|1,01) \underset{d=2}{\leq} 4+2\sqrt{2} \underset{d=3}{\leq} 5+\sqrt{5},
\nonumber
\end{align}
where the first bound holds for a qubit, and the second bound for a qutrit. The 
algebraic maximum $\mathcal{S}_2=8$ can be reached by a four-level system.
\end{observation}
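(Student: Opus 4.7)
The plan is to write each conditional probability as $p(b|a,xy)=\tr(E_{b|y}\rho_{a|x})$, where $\{E_{b|y}\}_b$ is the POVM associated with $\mathcal{M}_y$ and $\rho_{a|x}$ is the post-measurement state on the $d$-dimensional system. Because an arbitrary quantum channel may be inserted after each measurement, I can take $\rho_{a|x}$ to be any density operator whenever $p(a|x)>0$. Setting $A_y:=2E_{0|y}-I$ (so that $\|A_y\|_\infty\le 1$), I would collect the eight summands of $\mathcal{S}_2$ according to which effect they carry; a short computation then yields
\begin{equation*}
\mathcal{S}_2 = 4 + \tfrac{1}{2}\tr\bigl[(A_0+A_1)\Delta_1\bigr] + \tfrac{1}{2}\tr\bigl[(A_0-A_1)\Delta_2\bigr],
\end{equation*}
with $\Delta_1:=\rho_{0|0}-\rho_{1|1}$ and $\Delta_2:=\rho_{0|1}-\rho_{1|0}$ traceless Hermitian satisfying $\|\Delta_j\|_1\le 2$. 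Since the maximum of $\tr(M\Delta)$ over $\Delta=\rho-\sigma$ with $\rho,\sigma$ density operators equals $\lambda_{\max}(M)-\lambda_{\min}(M)=:\mathrm{range}(M)$, the task reduces to bounding
\begin{equation*}
\mathcal{S}_2 \le 4 + \tfrac{1}{2}\bigl[\mathrm{range}(A_0+A_1)+\mathrm{range}(A_0-A_1)\bigr]
\end{equation*}
under the dimension constraint.

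For qubits I would observe that the traceful part of $A_y$ contributes zero against the traceless $\Delta_j$, so one may take $A_y=\vec{v}_y\cdot\vec\sigma$ with $\|\vec{v}_y\|\le 1$; then $\mathrm{range}(A_0\pm A_1)=2\|\vec{v}_0\pm\vec{v}_1\|$, and the parallelogram identity together with the Cauchy--Schwarz inequality give $\|\vec{v}_0+\vec{v}_1\|+\|\vec{v}_0-\vec{v}_1\|\le 2\sqrt{\|\vec{v}_0\|^2+\|\vec{v}_1\|^2}\le 2\sqrt 2$, hence $\mathcal{S}_2\le 4+2\sqrt 2$. Saturation would be witnessed by $A_0=\sigma_z$, $A_1=\sigma_x$, taking $\rho_{0|0},\rho_{1|1}$ as the $\pm 1$ eigenstates of $(\sigma_z+\sigma_x)/\sqrt 2$ and $\rho_{0|1},\rho_{1|0}$ as those of $(\sigma_z-\sigma_x)/\sqrt 2$, starting from $\rho_{\mathrm{in}}=I/2$. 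At the other extreme, I would take $d=4$ with $A_0=\sigma_z\otimes I$, $A_1=I\otimes\sigma_z$ and choose $\rho_{0|0},\rho_{1|1},\rho_{0|1},\rho_{1|0}$ equal to the projectors on $\ket{00},\ket{11},\ket{01},\ket{10}$ respectively (from $\rho_{\mathrm{in}}=I/4$); both $\Delta_j$ then saturate against the max/min eigenspaces of $A_0\pm A_1$, producing $\mathcal{S}_2=8$ and confirming that $8$ is the algebraic maximum.

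The genuinely hard step is the qutrit bound, i.e.\ the claim that $\mathrm{range}(A_0+A_1)+\mathrm{range}(A_0-A_1)\le 2+2\sqrt 5$ for $3\times 3$ Hermitian $A_0,A_1$ with $\|A_y\|_\infty\le 1$. The qubit shortcut fails here because the two observables cannot in general be restricted to a common two-dimensional plane, and the parallelogram inequality is no longer tight. I would use the duality $\max_{\|A\|_\infty\le 1}\tr(AM)=\|M\|_1$ to rewrite the quantity as
\[
\max_{\ket{\psi_i}\in\mathbb{C}^3}\bigl(\|X\|_1+\|Y\|_1\bigr),
\]
with $X=\proj{\psi_1}+\proj{\psi_2}-\proj{\psi_3}-\proj{\psi_4}$ and $Y=\proj{\psi_1}-\proj{\psi_2}+\proj{\psi_3}-\proj{\psi_4}$, and then apply the iterative dimension-bounded optimization algorithm developed later in the paper, exploiting the fact that each of these Hermitian operators has rank at most three. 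The value $2+2\sqrt 5$ would then be pinned down by combining the algorithmic upper bound with an explicit qutrit realization that saturates it and by producing a dual SDP certificate of optimality; this certificate is the only nonroutine ingredient of the proof.
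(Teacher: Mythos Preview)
Your reformulation $\mathcal{S}_2=4+\tfrac12\tr[(A_0+A_1)\Delta_1]+\tfrac12\tr[(A_0-A_1)\Delta_2]$ is correct, and the qubit and $d=4$ arguments are complete. The paper obtains the qubit bound by a different route: it first restricts to projective rank-one effects, parametrizes $|\phi_1\rangle=\cos\chi|0\rangle+\sin\chi|1\rangle$, and optimizes each of the four state-dependent terms separately, arriving at $4+2|\sin\chi|+2|\cos\chi|\le 4+2\sqrt2$. Your parallelogram-law argument is more compact and avoids the reduction to projective measurements.

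The gap is the qutrit bound. What you propose there---run the see-saw heuristic and then exhibit a ``dual SDP certificate''---is not a proof. The see-saw algorithm of the paper is explicitly non-certifying (it need not converge even when a feasible point exists), and the optimization $\max_{\|A_y\|_\infty\le 1}\bigl[\mathrm{range}(A_0+A_1)+\mathrm{range}(A_0-A_1)\bigr]$ is a \emph{maximization} of a convex function, so standard SDP duality does not produce a certificate; you would need a moment/SOS hierarchy, which you do not set up. The paper instead gives a fully analytic argument: after reducing to projective effects $\mathcal{E}_{0|j}=\one-\proj{\phi_j}$ it observes that in $d=3$ the term $-\tr[\rho_1(\proj{\phi_0}+\proj{\phi_1})]$ can be made exactly zero by sending $\rho_1$ to the orthogonal complement of $\mathrm{span}\{\ket{\phi_0},\ket{\phi_1}\}$, and then the remaining single-angle optimization yields $4+2|\sin\chi|+1+|\cos\chi|\le 5+\sqrt5$.

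Ironically, your own framework finishes the qutrit case analytically with one more observation you did not make. Since $\mathrm{range}(\cdot)$ is convex and the constraint set $\{\|A_y\|_\infty\le 1\}$ is a product of operator intervals, the maximum is attained at extreme points, i.e.\ at $A_y$ with spectrum in $\{\pm1\}$. In $d=3$ this forces (up to a sign) $A_y=\one-2\proj{\phi_y}$; writing $c=|\langle\phi_0|\phi_1\rangle|$, the spectra of $A_0\pm A_1$ are $\{2,\pm 2c\}$ and $\{0,\pm 2\sqrt{1-c^2}\}$, giving $\mathrm{range}(A_0+A_1)+\mathrm{range}(A_0-A_1)=2+2c+4\sqrt{1-c^2}\le 2+2\sqrt5$. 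Plugging this into your inequality yields $\mathcal{S}_2\le 5+\sqrt5$ directly---no numerics, no certificate needed.
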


It should be noted that the above inequality can also be interpreted in the 
prepare-and-measure scenario where the pair $(a,x)$ determine the prepared 
state and $y$ the input defines the measurement setting. In this context, 
it corresponds to a quantum random access code \cite{WCD08}, for which the qubit bound has already been shown analytically \cite{ANT02}, and the qutrit bound has been obtained numerically \cite{NFA15}. This connection allows one to use inequalities and techniques known in the prepare and measure scenario for the study of temporal correlations and vice versa.
In Appendix B we provide a proof of the Observation, in particular we prove the qutrit bound analytically. Alongside we show an example of two extreme points, who both can be reached by measuring a qubit, but the uniform mixture requires a four-level system and reaches $\mathcal{S}_2=8$. 

\textit{Mixing with white noise.---} 
We will consider in the following that the experiment is affected by noise. We call the noise local white noise if the experiment is only disturbed at one time step, which causes the local-in-time distribution $\{p(a|hx)\}$ to be mixed with a local uniform distribution $\{p(a|hx)=\frac{1}{O}\}$. Here $h$ stands for the history, i.e., the chosen measurements and 
their outcomes before the time step. If the correlation $\{p(abc\cdots|xyz\cdots)\}$ itself is mixed with a uniform distribution $\{p(abc\cdots|xyz\cdots)=\frac{1}{O^L}\}$, we say that the noise is a global white noise. Counterintuitively, mixing a correlation $\{p(abc\cdots|xyz\cdots)\}$ with local or global white noise does not necessarily reduce the dimension required to realize it. This also exemplifies the nonconvexity of dimension-bounded temporal correlations.  Here we discuss the two kinds of white noise separately.

(i) Local white noise. For example, if the correlation is affected by local white noise to step two, the conditional probability distribution at the second step $\{p(b|a,xy)\}$ for chosen $a,x,y$ is mixed with $\{p(b|a,xy)=\frac{1}{O}\}$.  Obviously for certain correlations, this process can have more outcomes for one time step, which may increase the necessary 
dimension of quantum system.

(ii) Global white noise. Consider a given correlation $\{p(abc\cdots|xyz\cdots)\}$ is mixed with the identity correlation $\{p(abc\cdots|xyz\cdots)=\frac{1}{O^L}\}$. Here we present two examples, where the necessary dimension increases.

{\it Example 1.}  Consider a trivial extreme point in the (2-2-2) scenario,
$p(00|00)=p(00|01)=p(00|10)=p(00|11)=1.$
Its uniform mixture with the identity is
\begin{equation}
p(ab|xy)=
\begin{cases}
\frac{5}{8}, ~~~\text{for}~~a=0, b=0,\\
\frac{1}{8}, ~~~\text{otherwise},
\end{cases}
\end{equation}
which cannot be generated by a one-dimensional quantum system in contrast 
to the original correlations.

{\it Example 2.} Consider the extreme point defined by $p(00|00)=p(00|01)=p(00|10)=p(01|11)=1$. It can be easily seen that this point can be realized with measurements on a qubit \cite{HSG18,SBG19}. However, as we will show in Appendix C, the convex combination of this point and sufficiently weak global white noise requires at 
least a qutrit for its realization.

>From the discussion above, we see that the correlation space expands while the dimension $d$ of the underlying quantum system increases, until the whole correlation polytope is obtained. For the simplest scenario, the 
nonconvexity of the qutrit correlation space shows that the whole correlation polytope can not be reached with a qutrit, although all the extreme points can be achieved. A natural question then arises: which dimension is needed in order to obtain the entire temporal correlation polytope? We give an explicit formula for this dimension in the following, and we show 
that any correlation space generated by a system with smaller dimension is nonconvex.

\textit{General scenarios.---} 
For an arbitrary given scenario with $L$ measurement steps, $S$ possible measurements, and $O$ possible outcomes per measurement, the temporal correlation polytope $P^L_{S,O}$ has $(O^S)^{\frac{S^L-1}{S-1}}$ extreme points \cite{HSG18}. The following theorem provides the smallest dimension of a quantum system, such that the generated set of temporal correlations will be convex. We call this the \textit{critical dimension} $\mathcal{D}(L,S,O)$. We show moreover that the set of temporal correlations generated 
by a quantum system of critical dimension is already the temporal correlation polytope $P^L_{S,O}$. Hence, the set of temporal correlations of a quantum system cannot be extended by increasing its dimension beyond the critical dimension.

\begin{theorem}
The critical dimension is given by the following formula
\begin{equation}
\mathcal{D}(L,S,O)=\min \{O^S,\frac{(OS)^{L}-1}{OS-1}\}.
\end{equation}
Quantum systems with a dimension larger than or equal to the critical 
dimension generate the correlation polytope $P^L_{S,O}$. Moreover, 
any correlation space generated by quantum systems with smaller 
dimension is nonconvex.
\label{thm}
\end{theorem}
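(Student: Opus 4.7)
The plan is to separate the theorem into two directions: (i)~any quantum system of dimension $\mathcal{D}(L,S,O)$ realises every point of the polytope $P^L_{S,O}$, which proves $\mathcal{D}\le\min\{O^S,\frac{(OS)^L-1}{OS-1}\}$; and (ii)~for every $d<\mathcal{D}(L,S,O)$ the correlation set $C_d$ is nonconvex, which provides the matching lower bound.

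For direction (i) I would provide two explicit constructions, one matching each term in the $\min$. Given an arbitrary $p\in P^L_{S,O}$, the decomposition (\ref{decomp}) factorises $p$ into local history-conditional distributions $\{p(\cdot|h,x)\}$. The first construction uses a basis of dimension $O^S$ indexed by the deterministic response functions $f\colon\{0,\dots,S-1\}\to\{0,\dots,O-1\}$. At each step one prepares the classical mixture $\sum_f q_f^h|f\rangle\langle f|$, with $q_f^h$ chosen so that $\sum_{f:f(x)=a}q_f^h=p(a|h,x)$ (always possible since the local polytope has extreme points indexed by exactly those $f$), measures $M_x$ via the projectors $\sum_{f:f(x)=a}|f\rangle\langle f|$, and exploits the arbitrary CPTP dynamics allowed between measurements to re-prepare the state required for the next history. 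The second construction uses a basis indexed by the $\sum_{k=0}^{L-1}(OS)^k$ nodes of the measurement tree of depth $L$, with $M_x$ diagonal in this basis with eigenvalue $p(a|h,x)$, followed by a unitary sending $|h\rangle\mapsto|h\cdot(x,a)\rangle$.

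For direction (ii), the starting observation is that $C_d\subseteq P^L_{S,O}$ always holds by causality, and that the deterministic extreme points of $P^L_{S,O}$ can already be realised well below the critical dimension by specialising the constructions of (i) to pure initial states and projective measurements. Once sufficiently many extreme points are shown to lie in $C_d$, the only way $C_d$ can be convex is $C_d=P^L_{S,O}$, which by (i) forces $d\ge\mathcal{D}$. To close the argument I would exhibit a concrete target $p^*\in P^L_{S,O}\setminus C_d$, in the spirit of Example~2 and Observation~2: a suitably noisy convex combination of well-chosen extreme points whose realisation requires more coherent branches than dimension $d$ can sustain.

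The main obstacle is certifying the non-realisability of $p^*$ in dimension $d<\mathcal{D}$. My approach would be a rank bound on the Gram matrix of the ensemble of post-measurement states generated along the measurement tree: the Hilbert-space dimension upper-bounds the rank of this Gram matrix, and the two terms inside the $\min$ translate respectively into the number of distinct history states that must remain orthogonal after all $L$ steps, and the number of deterministic response functions that must be simultaneously perfectly distinguishable by the $S$ measurements at a single step. The case split between $\mathcal{D}=O^S$ and $\mathcal{D}=\frac{(OS)^L-1}{OS-1}$ selects which count binds, and the corresponding rank inequality — together with an explicit $p^*$ saturating it — is the technical heart of the argument.
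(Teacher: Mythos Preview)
Your direction (i) is essentially the paper's: the two constructions you sketch match Appendix~D, one with a basis indexed by deterministic response functions and projectors $\sum_{f:f(x)=a}|f\rangle\langle f|$, the other with a basis indexed by histories and diagonal POVMs with entries $p(a|h,x)$.

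Direction (ii) has a genuine gap. Your reduction ``if $C_d$ is convex and contains the extreme points then $C_d=P^L_{S,O}$, so exhibit $p^*\notin C_d$'' works only when $C_d$ actually contains \emph{all} extreme points, and that is false for small $d$: in the $(L,S,O)=(2,2,2)$ scenario the critical dimension is $4$, yet qubits already miss some deterministic extreme points. Your claim that the extreme points ``can already be realised well below the critical dimension by specialising the constructions of~(i)'' is incorrect as stated --- those constructions use exactly dimension $O^S$ or $\frac{(OS)^L-1}{OS-1}$, and restricting to pure initial states does not lower the requirement for a generic extreme point. The paper therefore makes a case split you omit. If $C_d$ contains all extreme points, your Gram-matrix rank idea is indeed the right tool; the paper's Lemma~4 supplies the missing quantitative step, showing that $\epsilon$-noisy local-in-time correlations force overlaps $|\langle\gamma_i|\gamma_j\rangle|^2\le d^2\epsilon/(1-\epsilon)^2$, so that for small $\epsilon$ the $\min\{O^S,\frac{(OS)^L-1}{OS-1}\}$ eigenvectors are linearly independent and cannot fit in dimension $d$. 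If $C_d$ does \emph{not} contain all extreme points, a different construction is needed: one picks two extreme points, each using at most $d$ distinct deterministic tuples (hence realisable in $C_d$), whose uniform mixture is forced to realise $d+1$ distinct deterministic tuples and therefore $d+1$ pairwise-orthogonal intermediate states. Without this second construction your nonconvexity argument does not cover all $d<\mathcal{D}(L,S,O)$.
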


To give an example, with this formula we can calculate the critical dimension of the simplest case as $\mathcal{D}(2,2,2)=4$. The detailed proof 
is presented in Appendix D.
A sketch of the proof is as follows: In order to show that the critical dimension is necessary to achieve all the correlations in the polytope, we 
consider two density matrices which have to be able to each realize a certain local-in-time correlation. Then we show an upper bound on the overlap of the eigenstates corresponding to the maximal eigenvalue of these two 
density matrices. One can then show that if the pairwise upper bound is low enough for a set of states, these states have to be linearly independent, which proves the necessity of the critical dimension. For the other direction, we construct protocols to realize an arbitrary point in the correlation space with a $\mathcal{D}$-dimensional system. Then we give examples contradicting the convexity of correlation space generated by systems whose dimension is smaller than critical dimension. Our results can be also straightforwardly used in the prepare-and-measure scenario where in addition to constraints on the dimension among others also the minimal 
overlap assumption has been considered \cite{SCB19}.

\textit{Numerical algorithms.---} 
Finally, let us provide a see-saw algorithm that can find the maximum of general polynomial, if the maximum is attained on pure states and projective measurements under dimension constraints. The polynomials discussed in this paper all fulfill this assumption. Exploiting the correspondence between length-two temporal correlations and the prepare and measure setup, our method can be utilized in both scenarios.

Consider any given polynomial $p(X_1,X_2,\ldots,X_n)$ where the $X_i$ are 
the involved probabilities of the form $p(a|x)$ or $p(b|a,xy)$. Since every maximization problem can be converted into a minimization problem, we only present the method for finding the minimum of such a polynomial. To find the minimum of $p(X_1,X_2,\ldots,X_n)$ for a $d$-dimensional quantum 
system, we can first choose a random number $q$, and check whether $p(X_1,X_2,\ldots,X_n)$ can achieve a value smaller than $q$ with correlations obtained from measuring a $d$-dimensional system. We illustrate this using the $d=2$ case as an example. For a correlation that can be produced by a qubit, its corresponding $(X_1,X_2,\ldots,X_n)$ has a quantum representation $X_i=\tr (\rho_i M_i)$, with $\rho_i$ being the initial or intermediate states and $M_i$ the measurement effects. By assumption, the polynomial is minimized by a correlation with pure states $\rho_i = |\psi_i\rangle\langle\psi_i|$ and projective measurement effects $M_i = |\phi_
i\rangle\langle\phi_i|$. For this correlation we can construct a $2\times 
2n$ matrix
\begin{equation}
\Gamma =  \begin{pmatrix}
   |\psi_1\rangle, \ldots,    |\psi_n\rangle,   |\phi_1\rangle,  \ldots , 
|\phi_n\rangle|
\end{pmatrix}.
\label{eq5}
\end{equation}
Then, the matrix $\Gamma^\dagger \Gamma$ is a $2n \times 2n$ positive semi-definite matrix with all diagonal entries equal to $1$ and rank 2. Every $X_i =\tr (\rho_i M_i)=|\langle\psi_i|\phi_i\rangle|^2$ is the absolute square of a certain entry. If the minimum of $p(X_1,X_2,\ldots,X_n)$ 
is smaller than a number $q$, then there should exist a common object in the following two sets of $2n \times 2n$ matrices:

\noindent
($M_1$) Rank two positive semi-definite matrices.

\noindent
($M_2$) Hermitian matrices  with the main diagonal $(1,1,\ldots,1)$, whose entries corresponding to $\{X_i\}$ satisfy the inequality
\begin{equation}
p(X_1,X_2,\ldots,X_n)\leq q.
\end{equation}

To examine the existence of such a matrix, one can iterate between these two sets.
Starting from a matrix in $M_1$ one can find analytically the closest matrix
in $M_2$. For this matrix, one can then find analytically the closest
matrix in $M_1$ again, etc. We describe the algorithm in detail in Appendix E. 
A common object exists if the iteration converges, the converse is however 
not true. In Appendix F we give an example of applying our method to treat 
the inequality (\ref{eq2}) numerically.

\textit{Conclusions.---} 
We characterized the nonconvex structure of temporal correlation space generated by finite-dimensional quantum systems.  For arbitrary scenarios, we derived the critical dimension of quantum systems to generate a convex 
set of temporal correlations. We established nonlinear inequalities for the simplest case with upper bounds satisfied by qubits or qutrits respectively. These nonlinear inequalities can serve as implementable dimension witnesses. In this way, our results might trigger experimental investigations of the performance of systems with different finite dimensions.

Note that our setting allows for arbitrary dynamics happening between adjacent time steps. The structure of the temporal correlation space can change if we limit the possible intermediate channels to certain classes, e.g., Markovian channels. It would be interesting to study the features of correlation space corresponding to restricted quantum channels. This might inspire a general method to experimentally reveal the properties of quantum channels by analyzing the obtained temporal correlations. We leave this problem for future research.

We would like to thank Marco T\'ulio Quintino for discussions. We acknowledge 
financial support by the Deutsche Forschungsgemeinschaft (DFG, German Research
Foundation, project numbers 447948357 and 440958198), the Sino-German Center 
for Research Promotion (Project M-0294), the ERC (Consolidator Grant 
683107/TempoQ), and the DAAD (Projekt-ID: 57445566). Y. M. acknowledges funding 
from a CSC-DAAD scholarship. C. S. acknowledges support by the Austrian 
Science Fund (FWF): J 4258-N27. Z.-P. X. thanks the support of the Alexander 
von Humboldt Foundation.

\renewcommand{\theequation}{S\arabic{equation}}
\renewcommand{\thefigure}{S\arabic{figure}}


\onecolumngrid

\section{Appendix A: Proof of Observation 1}
Here we prove Observation 1, i.e., we show that for a qubit the quantity
\begin{equation}
\mathcal{S}_1:=2p(0|0)+p(0|0,00)+2p(0|1)+p(0|0,11)+p(1|0,10)p(1|0,01)
\end{equation}
can not exceed $6$. In order to obtain the upper bound, we first parametrize the measurement effects $\mathcal{E}_{r|s}$ corresponding to the outcome $r=0,1$ of measurement $M_s, s=0,1$ by
\begin{equation}
\begin{split}
\mathcal{E}_{0|0} &= p_0 |a_0\rangle\langle a_0| +q_0 |a_0^\bot\rangle\langle a_0^\bot |= \frac{p_0+q_0}{2} \one+\frac{p_0-q_0}{2}\vec{c}\cdot\vec{\sigma},~~~\mathcal{E}_{1|0} = \one-\mathcal{E}_{0|0}, \\
\mathcal{E}_{0|1} &= p_1 |a_1\rangle\langle a_1| +q_1 |a_1^\bot\rangle\langle a_1^\bot |= \frac{p_1+q_1}{2} \one+\frac{p_1-q_1}{2}\vec{d}\cdot\vec{\sigma},~~~\mathcal{E}_{1|1} = \one-\mathcal{E}_{0|1},
\end{split}
\end{equation}
where $\one$ is the identity, $\vec{\sigma}$ is the matrix vector of Pauli matrices, the real vectors $\vec{c}$ and $\vec{d}$ are of unit length, and $p_0, p_1, q_0, q_1 \in [0,1]$. Moreover, we denote the initial state by $\rho_{\rm in}$ and the the post-measurement states corresponding to the effects $\mathcal{E}_{0|0}$ and $\mathcal{E}_{0|1}$ as $\rho_0$ and $\rho_1$, respectively. The Bloch representation of those states is
\be
\rho_j=\frac 1 2 (\one +\vec{\alpha}_j\cdot \vec{\sigma}),
\ee
for $j\in\{{\rm in}, 0, 1\}$ and Bloch vectors $\vec{\alpha}_j\in \mathbb{R}^3$. 

With these parametrization one can easily observe that our inequality is linear with respect to the parameters $p_j,q_j,|\alpha_j|$, which means the inequality is maximized by pure quantum states, i.e. $|\alpha_j|=1$, and projective measurements. Since we are only considering qubits, the projectors may be of rank $1$ or $2$, and the effects corresponding to rank $2$ projectors are equal to the identity. Only one outcome will be obtained with certainty independent of the state for measurements whose effect is the identity. We name this kind of measurements trivial measurements. For trivial measurements the maximum of $\mathcal{S}_1$ is 6. Hence, the effects which maximize $\mathcal{S}_1$ are of the form
\begin{equation}
\begin{split}
\mathcal{E}_{0|0} &= \frac 1 2 (\one+\vec{c}\cdot\vec{\sigma}),~~~\mathcal{E}_{1|0} = \frac 1 2 (\one-\vec{c}\cdot\vec{\sigma}), \\
\mathcal{E}_{0|1} &= \frac 1 2 (\one+\vec{d}\cdot\vec{\sigma}),~~~\mathcal{E}_{1|1} = \frac 1 2 (\one-\vec{d}\cdot\vec{\sigma}).
\end{split}
\end{equation}
Without loss of generality, we can set
\begin{equation}
\vec{c}=(1,0,0), ~~~\vec{d}=(\cos (-2x_3),\sin (-2x_3),0),
\end{equation}
with $x_3\in[-\pi/2,\pi/2]$. Then $\mathcal{S}_1$ can be rewritten as
\be
3+\frac 1 2 (2\vec{c}\cdot \vec{\alpha}_{\rm in}+\vec{c}\cdot \vec{\alpha}_0+2 \vec{d}\cdot\vec{\alpha}_{\rm in}+\vec{d}\cdot\vec{\alpha}_1)+\frac 1 4(1-\vec{c}\cdot\vec{\alpha}_1)(1-\vec{d}\cdot\vec{\alpha}_0).
\ee
From this equation we can easily find that the maximum is achieved for $\vec{\alpha}_{\rm in}= \frac{\vec{c}+\vec{d}}{|\vec{c}+\vec{d}|}$, and if the vectors $\vec{\alpha}_0$ and $\vec{\alpha}_1$ lie in the plane spanned by the vectors $\vec{c}$ and $\vec{d}$. With this $\mathcal{S}_1$ can be written as 
\be
\mathcal{S}_1=3+\frac 1 4 \left\{2 [\cos (2 x_3+x_0)+\cos (2 x_3+x_1)+4 \cos x_3]+(\cos x_0-1) (\cos x_1-1)\right\}.
\ee
Substituting
\be
x_1=2 \tan ^{-1} a_1,~x_0 = 2 \tan ^{-1} a_0,~x_3 = 2 \tan ^{-1} a_3,
\ee
we obtain for the points where the gradient with respect to the variables $a_0$, $a_1$ and $a_3$ vanishes that
\begin{equation}
\begin{split}
f_1=&2 \left(a_0^2-1\right) \left(a_1^2+1\right) a_3^3-2 \left(a_0^2-1\right) \left(a_1^2+1\right) a_3-2 a_0 \left(4 a_1^2+3\right) a_3^2+a_0 a_3^4+a_0=0,\\
f_2=&2 \left(a_0^2+1\right) \left(a_1^2-1\right) a_3^3-2 \left(a_0^2+1\right) \left(a_1^2-1\right) a_3-2 \left(4 a_0^2+3\right) a_1 a_3^2+a_1 a_3^4+a_1=0,\\
f_3=&-2 a_3^3 \left(a_0^2 \left(3 a_1^2+1\right)+a_1^2-1\right)+2 a_3 \left(a_0^2 \left(a_1^2-1\right)-a_1^2-3\right)+a_3^4 (-(a_0+a_1)) (a_0 a_1+1)\\
&+6 a_3^2 (a_0+a_1) (a_0 a_1+1)-(a_0+a_1) (a_0 a_1+1)=0,
\end{split}
\end{equation}
respectively. We consider then the following four cases:

\noindent
{\bf Case (1).} $a_3=0$. In this case one has $x_3=0$ and can easily see that the bound holds.

\noindent
{\bf Case (2).} $a_3\neq 0$, $a_0\neq a_1$. From the equations
\begin{gather}
f_1-f_2=0,\\
a_0 f_2-a_1 f_1=0,\label{S1}\\
f_3+(1+a_0 a_1)(f_1+f_2)=0,
\end{gather}
we obtain the relation 
\begin{equation}
a_0=\frac 14\left( \frac{4 \left(a_1^2+1\right) a_3}{2 a_1 a_3+a_3^2-1}-2 a_1-a_3+\frac{1}{a_3}\right).
\label{S2}
\end{equation}
Substituting then Eq.~(\ref{S2}) into Eq.~(\ref{S1}), we obtain a polynomial equation of order six in $a_1$ (which is too tedious to present). This equation can be factorized into three quadratic polynomial factors. One of them is ruled out because it leads to $a_0=a_1$, which conflicts with our assumption. Since $\mathcal{S}_1$ is symmetric under the transformation $\{a_0\rightarrow -a_0, a_1 \rightarrow -a_1, a_3\rightarrow -a_3\}$, the two remaining polynomials give the same results. The equation of order six can thus be reduced to a quadratic equation with two different real roots. Due to the symmetry between $a_0$ and $a_1$, we can assign the two different roots respectively to $a_0$ and $a_1$. With this we obtain
\begin{equation}
 f_3= -\frac{\left(a_3^2+1\right)^3 \left(a_3^4+8 a_3-1\right)}{16 (a_3-1)^2 a_3^2}=0.
\end{equation}
This equation has two real roots, one of which will lead to imaginary solutions of $a_0$ and $a_1$. The only possible real root of $a_3$ can be computed analytically, its numerical value is approximately $0.12497$, which leads to $\mathcal{S}_1\approx 5.12402<6$.

\noindent
{\bf Case (3).} $a_3 \neq 0, a_1=a_0$. In this case, the expression of the partial derivatives can be reduced to 
\begin{gather}
   f_1 = f_2=2 \left(a_0^4-1\right) a_3^3-2 \left(a_0^4-1\right) a_3-2 a_0 \left(4 a_0^2+3\right) a_3^2+a_0 a_3^4+a_0, \label{S3}\\
   f_3 = -2 \left(a_0^2+1\right) (a_0 a_3-1) \left(3 a_0 a_3^2-a_0+a_3^3-3 a_3\right).
\end{gather}
We consider different solutions of $f_3=0$ separately. If $a_0 a_3 -1 =0$, then we have
\begin{equation}
f_1= -\frac{\left(a_3^2+1\right)^2 \left(a_3^2+2\right)}{a_3^3}\neq 0.
\end{equation}
If $3 a_0 a_3^2-a_0+a_3^3-3 a_3 = 0,$ that is
\begin{equation}
  a_0 = \frac{3 a_3-a_3^3}{3 a_3^2-1},
  \label{S4}
\end{equation}
one obtains
\begin{equation}
 [1+(-14+t) t)]\{-1+t [8+t (-5+2 t)]\} = 0,
 \label{S5}
\end{equation}
by substituting Eq. (\ref{S4}) into Eq. (\ref{S3}), here $t=a_3^2$. There are three real roots of Eq. (\ref{S5}), namely $a_3=2\pm \sqrt{3}$ or $a_3\approx 0.368671$. In these three cases, $\mathcal{S}_1$ equals to $\frac 1 4 (13\mp 6\sqrt{3})$ or $5.86072$, respectively, which are all strictly smaller than the upper bound $6$.

\noindent
{\bf Case (4).} On the boundary points, where at least one of the parameters $x_1,x_2,x_3$ equals $\pm \pi$, it is easy to prove the validity of desired inequality.
\hfill$\square$

We provide here also an example of a state and measurements violating this inequality, which also shows the nonconvexity of the qubit correlation space in the scenario $L=S=O=2$. It follows from inequality (4) that although the two extreme points
\begin{equation}
\begin{split}
p(00|00)=p(00|01)=p(01|10)=p(00|11)=1,\\
p(00|00)=p(01|01)=p(00|10)=p(00|11)=1,
\end{split}
\end{equation}
are both reachable by a qubit, the uniform mixture of them is not, as it violates inequality (4) in the main text. The algebraic maximum $\mathcal{S}_1=7$ is attained by an extreme point
\be
p(00|00)=p(01|01)=p(01|10)=p(00|11)=1,
\ee
which can be obtained by using a qutrit \cite{HSG18}.

\section{Appendix B: Proof of Observation 2}

We will show in the following that inequality (5) holds. First we denote
\begin{equation}
\mathcal{S}_2=p(0|0,00)+p(0|0,10)+p(1|1,00)+p(1|1,10)+p(0|0,01)+p(1|1,11)+p(1|0,11)+p(0|1,01).
\label{s2}
\end{equation}
Based on the same reasoning as in the proof of Observation 1 (see Appendix A), $\mathcal{S}_2$ is also maximized by projective measurements and pure states. If one of the measurements is trivial, the value of $\mathcal{S}_2$ is no larger than 6. We can write for non-trivial projective measurements, 
\be
\mathcal{E}_{0|j}=\one-|\phi_j\rangle\langle\phi_j|,~j=0,1,
\ee
where $|\phi_j\rangle$ are pure states. Then the expression of $\mathcal{S}_2$ can be rewritten as
\be
\mathcal{S}_2=4-\tr [\rho_1(|\phi_0\rangle\langle\phi_0|+|\phi_1\rangle\langle\phi_1|)]+\tr [\rho_2(|\phi_0\rangle\langle\phi_0|-|\phi_1\rangle\langle\phi_1|)]+\tr [\rho_3(|\phi_1\rangle\langle\phi_1|-|\phi_0\rangle\langle\phi_0|)]+\tr [\rho_4(|\phi_0\rangle\langle\phi_0|+|\phi_1\rangle\langle\phi_1|)],
\ee
where $\rho_1$ is the intermediate state after measurement $0$ is performed and outcome $0$ is produced on the first time step, $\rho_2$ is the intermediate state after measurement $0$ is performed and outcome $1$ is produced, $\rho_3$ corresponds to measurement $1$ and outcome $0$, $\rho_4$ corresponds to measurement $1$ and outcome $1$.

Without loss of generality, we choose $|\phi_0\rangle=|0\rangle$, then parametrize $|\phi_1\rangle=\cos \chi |0\rangle+\sin \chi |1\rangle$. Since $\rho_1$ and $|\phi_0\rangle\langle\phi_0|+|\phi_1\rangle\langle\phi_1|$ are all positive semidefinite matrices, the second term is maximized when $\rho_1$ is the eigenstate corresponding to the smallest eigenvalue of $|\phi_0\rangle\langle\phi_0|+|\phi_1\rangle\langle\phi_1|$. The maximum of the second term can thus be straightforwardly calculated as
 \begin{equation}
 \max_{\rho_1} \{-\tr [\rho_1(|\phi_0\rangle\langle\phi_0|+|\phi_1\rangle\langle\phi_1|)]\}= -\min\{2\cos^2 \left(\frac \chi 2\right),2\sin^2 \left(\frac \chi 2\right) \}.
 \end{equation}
Denote $\rho_2=|\psi\rangle\langle\psi|$, where $|\psi\rangle=\cos \alpha|0\rangle + e^{i\phi}\sin \alpha|1\rangle$, the third term can be written as
\begin{equation}
\begin{split}
\max_{\rho_2}\tr [\rho_2(|\phi_0\rangle\langle\phi_0|-|\phi_1\rangle\langle\phi_1|)]&= \max_{|\psi\rangle}(|\langle 0|\psi\rangle|^2-|\langle \phi_1|\psi \rangle|^2)\\
&=\max_{\alpha,\phi}(\cos^2 \alpha-|\cos \chi \cos \alpha + e^{i\phi}\sin \chi  \sin \alpha|^2)\\
&=\max_\alpha(\cos^2 \alpha-(\cos \chi \cos \alpha + \sin \chi  \sin \alpha)^2)\\
&= \max_\alpha\frac 1 2[ \cos 2\alpha -\cos (2\chi -2 \alpha)]\\
&=\max_\alpha (- \sin 2(\alpha - \frac \chi 2) \sin \chi)\\
&= |\sin \chi|.
\end{split}
\end{equation}
The maximum is achieved when $\sin 2(\alpha - \frac \chi 2)=-{\rm sgn}(\sin \chi)$ and $e^{i\phi}$ equals $1$ or $-1$, here we choose $e^{i\phi}=1$ since the two values lead to the same result due to the maximization over $\alpha$. Using the same method, we find $\max_{\rho_3}\tr (\rho_3(|\phi_1\rangle\langle\phi_1|-|\phi_0\rangle\langle\phi_0|))$ is also $|\sin \chi|$ and $\max_{\rho_4} \tr (\rho_4(|\phi_0\rangle\langle\phi_0|+|\phi_1\rangle\langle\phi_1|))= \max\{2\cos^2 \frac \chi 2,2\sin^2 \frac \chi 2 \}$. Since the maximization is performed over different states for each term, the maximum of $\mathcal{S}_2$ equals the sum of the maximum of all these terms. Therefore the maximal value for qubits is
\begin{equation}
\begin{split}
 \max_{d=2} \mathcal{S}_2&=4 -\min\{2\cos^2 \frac \chi 2,2\sin^2 \frac \chi 2 \}+2\max_\chi  |\sin \chi| + \max\{2\cos^2 \frac \chi 2,2\sin^2 \frac \chi 2 \}\\
 &= 4+\max_\chi (2|\sin \chi|+2|\cos^2\frac{\chi}{2}-\sin^2\frac{\chi}{2}|)\\
 &=4+\max_\chi (2|\sin \chi|+2|\cos \chi|)\\
 &=4+2\sqrt{2}.
\end{split}
\end{equation}

For qutrits, denote $\rho_i=|\psi_i\rangle\langle \psi_i |, i=1,2,3,4$, where $|\psi_i\rangle = \cos \alpha_i\cos \beta_i|0\rangle + e^{i\theta_i}\sin \alpha_i \cos \beta_i|1\rangle+e^{i \phi_i}\sin \beta_i |2\rangle$,  since the second term is non-positive, we can choose $\rho_1=|2\rangle\langle 2|$ to achieve the maximum of the second term. The maxima of the other terms are attained when we choose $\cos \beta_i=1, i =2,3,4$, which reduces to the qubit case. The qutrit bound is the sum of all these terms, which is
\begin{equation}
\begin{split}
 \max_{d=3} \mathcal{S}_2&=4+2\max_\chi  |\sin \chi| + \max\{2\cos^2 \frac \chi 2,2\sin^2 \frac \chi 2 \}\\
&=4+\max_\chi (2|\sin \chi|+2\sin^2 \frac \chi 2)\\
&=5+\sqrt{5}.
\end{split}
\end{equation}
This proves the Observation. \hfill$\square$

If we interprete the Observation in the prepare-and-measure scenario, the qubit bound has already been shown analytically \cite{ANT02}, and the qutrit bound has been obtained numerically \cite{NFA15}. 

An example violating this inequality is presented below, which detects the nonconvexity of the qubit and qutrit correlation space in the simplest scenario $L=S=O=2$.
The extreme points
\begin{equation}
\begin{split}
q_1&: p(00|00)=p(00|01)=p(00|10)=p(01|11)=1,\\
q_2&: p(11|00)=p(10|01)=p(11|10)=p(11|11)=1,
\end{split}
\end{equation}
can be reached by measuring a single qubit. However, the mixture of them, namely $q=\frac{q_1+q_2}{2}$, achieves the maximum $\mathcal{S}_2=8$, thus can not 
be attained even by a qutrit.

\section{Appendix C: Calculations for Example 2}

We consider here the mixture of the extreme point $p(00|00)=p(00|01)=p(00|10)=p(01|11)=1$, which can be realized with measurements on a qubit, and the global white noise $\{p(ab|xy)=1/4, a,b,x,y=0,1\}$.  
We denote the convex weight for the (normalized) global white noise by $\epsilon$, hereafter we show when the noise is sufficiently weak, to be specific, $\epsilon<0.065$, this convex mixture requires at least a qutrit to be realized. For the mixture it holds that $p(0|x)=1-\epsilon/2$, $p(0|0,0y)=p(0|0,10)=p(1|0,11)=\frac{4-3\epsilon}{2(2-\epsilon)}$ and $p(0|1,xy)=p(1|1,xy)=1/2$. 
Note that for the correlation considered, the local-in-time probability distributions (i.e. conditional probabilities $p(b|a,xy)$ for fixed $a,x$) at the second time step can be written either as ${\bf 1}/2$ or $(1-\tilde{\epsilon}) e_i+\tilde{\epsilon}{\bf 1}/2$ with $e_i$ being the tuple $(0,0)$ or $(0,1)$, and $\tilde{\epsilon}=\frac{\epsilon}{2-\epsilon}<0.034$. First we look at the conditional probability distributions, from what we will show in Appendix D, the qubit density matrix that leads to $(1-\tilde{\epsilon}) e_i+\tilde{\epsilon}{\bf 1}/2$ has to be of the form $\rho_i= \mu_i \proj{\Psi_i}+(1-\mu_i)\proj{\Psi_i^\perp}= (1-\mu_i){\bf 1}+(2 \mu_i-1)\proj{\Psi_i}$ with $1/2\leq\mu_i\leq 1$, $\braket{\Psi_i}{\Psi_i^\perp}=0$ and $|\braket{\Psi_1}{\Psi_2}|\leq \frac{2\sqrt{\tilde{\epsilon}}}{1-\tilde{\epsilon}}$. Then we prove in the following that ${\bf 1}/2$ can not be realized by any measurements.

Let us denote as before the effects for measurement $x$ and outcome $a$ by $\mathcal{E}_{a|x}$. We obtain that
\begin{align}\nonumber
\tr (\mathcal{E}_{1|0}\rho_i)&=\frac{\tilde{\epsilon}}{2}\\
&=(1-\mu_i)\tr (\mathcal{E}_{1|0})+(2 \mu_i-1)\tr (\mathcal{E}_{1|0}\proj{\Psi_i})\nonumber\\
&\geq \mu_i \tr (\mathcal{E}_{1|0}\proj{\Psi_i})\nonumber\\
&\geq\frac{1}{2}\ \tr (\mathcal{E}_{1|0}\proj{\Psi_i}),
\end{align}
where we used first that $\tr (\mathcal{E}_{1|0})\geq \tr (\mathcal{E}_{1|0}\proj{\Psi_i})$ due to $\mathcal{E}_{1|0}\geq 0$ and then $\mu_i\geq 1/2$. Hence, we have that
\begin{align}
\tr (\mathcal{E}_{1|0}\proj{\Psi_i})\leq \tilde{\epsilon}.
\end{align}
Therefore, when writing $\mathcal{E}_{1|0}$ in the basis $\{\ket{\Psi_1},\ket{\Psi_1^\perp}\}$, i.e. $\mathcal{E}_{1|0}=\alpha \proj{\Psi_1}+\beta \proj{\Psi_1^\perp}+\gamma \kb{\Psi_1}{\Psi_1^\perp}+\gamma^* \kb{\Psi_1^\perp}{\Psi_1}$, it has to hold that $\alpha\leq \tilde{\epsilon}$.  Note that if $\beta < \alpha$ the largest eigenvalues has to be smaller than $\alpha+\beta<2 \tilde{\epsilon}<1/2$ and therefore it is not possible to attain for this measurement the outcome "1" with probability $1/2$.  

Let us next consider the case $\beta\geq \alpha$.  We expand $\ket{\Psi_2}$ in the same basis, i.e. $\ket{\Psi_2}= \delta\ket{\Psi_1}+\zeta\ket{\Psi_1^{\perp}}$. One obtains that 
\begin{align}\nonumber
\tilde{\epsilon}&\geq \tr (\mathcal{E}_{1|0}\proj{\Psi_2})\\
&=\alpha |\delta|^2+\beta (1-|\delta|^2)+\gamma \zeta \delta ^*+\gamma^* \zeta^* \delta \nonumber\\
&\geq \beta (1-|\delta|^2-2 |\zeta| |\delta|)\nonumber\\
&\geq \beta (1-|\delta|^2-2 |\delta|)\nonumber\\
&= \beta [2-(1+|\delta|)^2]\nonumber\\
&\geq \beta [2-(1+\frac{2\sqrt{\tilde{\epsilon}}}{1-\tilde{\epsilon}})^2],
\end{align}
where  we used for the second line that in this case $\beta\geq |\gamma|$ and  $\alpha |\delta|^2\geq 0$, in the third line $|\zeta|\leq 1$ and in the last line $|\delta|\leq\frac{2\sqrt{\tilde{\epsilon}}}{1-\tilde{\epsilon}}$. Hence, we have that 
\begin{align}
\beta \leq \frac{\tilde{\epsilon}}{ [2-(1+\frac{2\sqrt{\tilde{\epsilon}}}{1-\tilde{\epsilon}})^2]}.
\end{align}
However, for our choice of $\tilde{\epsilon}\leq 0.034$ this implies that $\alpha+\beta<1/2$ and therefore also in this case the probability distribution ${\bf 1}/2$ cannot be realized. Hence, by mixing this extreme point with a small amount of the identity the dimension required to realize the correlation increases.

\section{Appendix D: The proof of Theorem 3}
 
Before proving Theorem 3 let us first provide some useful definition.
Every correlation  in the arrow of time polytope can be decomposed as shown by Eq. (2) in the main text. The conditional probabilities which appear on the right hand side of the equation denote the {\it local-in-time} probabilities at some specific time step while the history is known. Take $p(b|a,xy)$ as an example, denote the {\it local-in-time} probability of getting $b$ as outcome by measuring $y$ at the second time step, with the knowledge of at the first time step measurement $x$ was chosen and outcome $a$ was obtained. With this we define the {\it local-in-time} correlation as probability distributions
\be
\{p(a|hx)=\tr (\rho_h \mathcal{E}_{a|x})\},
\ee
where $h$ stands for the history of measurements and outcomes from the preceding time steps, and $\rho_h$ denotes the intermediate state after the history $h$ took place.  Considering scenarios having $S$ possible measurements, with given and fixed history $h$, we denote local deterministic assignments as tuples $e_i=(a_0,a_1,\ldots,a_{S-1})$, which means local-in-time probability distributions
\begin{equation}
\{p(a_k|hk)=1, k=0,1,\ldots,S-1\}.
\label{tuple}
\end{equation}
With this we can phrase the following lemma which will allow us to prove Theorem 3.

\begin{Lemma}\label{lp}
Let $\frac{\epsilon}{O} {\bf 1} +(1-\epsilon) e_i$ be local-in-time probability distributions in which $\epsilon$ is an arbitrary weight, $\frac{\bf 1}{O}$ is the local normalized identity distribution with $\{p(a|hx)=\frac{1}{O}, \forall a,x\}$ and $e_i$ is a tuple. Moreover,  denote by $\rho_i$ the d-dimensional intermediate state which generates the local-in-time distribution and by $\ket{\gamma_i}$  the eigenstate to its largest eigenvalue. Then it holds for $e_i\neq e_j$ that 
\be
|\langle \gamma_i|\gamma_j\rangle|^2 \leq \frac{d^2\epsilon}{(1-\epsilon)^2}.
\ee
\end{Lemma}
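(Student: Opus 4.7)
My plan is to use the structural fact that $e_i$ and $e_j$ differ as $S$-tuples to extract a POVM effect that distinguishes $\rho_i$ and $\rho_j$ at Helstrom level $1-\epsilon$, and then to pull this distinguishability through two operator-ordering bounds down to the overlap $|\langle\gamma_i|\gamma_j\rangle|^2$. Since $e_i\neq e_j$, there must exist a measurement index $x^*$ at which the two tuples disagree; let $a$ denote the outcome assigned by $e_i$ at $x^*$. Writing $\mathcal{E}:=\mathcal{E}_{a|x^*}$, the assumption that $\rho_i,\rho_j$ realise the respective noisy local-in-time distributions immediately yields
$\tr(\rho_i\mathcal{E})=1-\epsilon+\epsilon/O$ and $\tr(\rho_j\mathcal{E})=\epsilon/O$,
so that the variational characterisation of the trace distance gives $\tfrac12\|\rho_i-\rho_j\|_1\geq 1-\epsilon$.

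The key auxiliary input is to sharpen the trivial lower bound $\mu_k\geq 1/d$ on the largest eigenvalue of $\rho_k$. Since $0\leq\mathcal{E}\leq\one$ one has $\tr(\mathcal{E})\leq d$, so H\"older gives $\tr(\rho_i\mathcal{E})\leq\mu_i\tr(\mathcal{E})\leq d\mu_i$; combined with the first paragraph this yields $\mu_i\geq (1-\epsilon)/d$, and the same argument applied to the effect corresponding to the $j$-outcome at $x^*$ gives $\mu_j\geq (1-\epsilon)/d$. The operator inequality $\rho_k\succeq \mu_k\proj{\gamma_k}$ then implies
$$\tr(\rho_i\rho_j)\;\geq\;\mu_i\mu_j\,|\langle\gamma_i|\gamma_j\rangle|^2\;\geq\;\frac{(1-\epsilon)^2}{d^2}\,|\langle\gamma_i|\gamma_j\rangle|^2,$$
so that $|\langle\gamma_i|\gamma_j\rangle|^2\leq d^2\,\tr(\rho_i\rho_j)/(1-\epsilon)^2$.

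It therefore remains to bound $\tr(\rho_i\rho_j)\leq\epsilon$. I would obtain this through the Cauchy--Schwarz-type inequality $\tr(\rho_i\rho_j)\leq F(\rho_i,\rho_j)^2$ together with the Fuchs--van de Graaf bound $F^2\leq 1-D^2$; inserting the trace-distance estimate from the first paragraph yields $F^2\leq 2\epsilon-\epsilon^2$, which reproduces the stated inequality up to a constant of at most~$2$. The main obstacle, and the place the precise numerical factor in the lemma comes from, is exactly this last step: the Fuchs--van de Graaf detour loses a factor of roughly~$2$, and tightening it so as to reach $\tr(\rho_i\rho_j)\leq\epsilon$ likely requires exploiting the explicit form $\epsilon\one/O+(1-\epsilon)e_i$ of the noisy distribution more directly---e.g.\ by decomposing $|\gamma_j\rangle=c|\gamma_i\rangle+s|\chi\rangle$ with $|\chi\rangle\perp|\gamma_i\rangle$ and controlling the cross matrix element $\langle\gamma_i|\mathcal{E}|\chi\rangle$ via the $2\times 2$ block positivity of both $\mathcal{E}$ and $\one-\mathcal{E}$ in the basis $\{|\gamma_i\rangle,|\chi\rangle\}$, recombining to obtain the sharpness.
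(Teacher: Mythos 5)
Your skeleton is sound and runs parallel to the paper's: pick a measurement $x^*$ where the tuples disagree, use the resulting effect to separate $\rho_i$ and $\rho_j$, lower-bound the largest eigenvalues so that $\mu_i\mu_j\,|\langle\gamma_i|\gamma_j\rangle|^2\leq\tr(\rho_i\rho_j)$ becomes useful, and upper-bound the purity overlap $\tr(\rho_i\rho_j)$. However, as you yourself concede, the argument as written does not prove the stated inequality: with $\mu_i,\mu_j\geq(1-\epsilon)/d$ and the Fuchs--van de Graaf detour $\tr(\rho_i\rho_j)\leq F^2\leq 1-(1-\epsilon)^2=2\epsilon-\epsilon^2$, you only reach $|\langle\gamma_i|\gamma_j\rangle|^2\leq d^2(2\epsilon-\epsilon^2)/(1-\epsilon)^2$, which is strictly weaker than the lemma for every $\epsilon\in(0,1)$, and the sketch you offer for recovering the exact constant (block positivity of $\mathcal{E}$ and $\one-\mathcal{E}$ in a two-dimensional subspace) is not carried out. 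So there is a genuine gap at precisely the step you flag.

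It is instructive to see how the paper avoids it, because the trick is not a sharper overlap bound but a correlated eigenvalue bound. The paper takes the Helstrom decomposition $\one=P_++P_-$ with $\tr\bigl(P_+(\rho_i-\rho_j)\bigr)\geq 1-\epsilon$, deduces $\tr(P_+\rho_j)\leq\epsilon$ and $\tr(P_-\rho_i)\leq\epsilon$, and from these gets the (looser) purity bound $\tr(\rho_i\rho_j)\leq 4\epsilon$ via Cauchy--Schwarz; the saving comes from bounding the eigenvalues as $\mu_i\geq(1-\epsilon)/\tr(P_+)$ and $\nu_j\geq(1-\epsilon)/\tr(P_-)$, so that $\tr(P_+)\tr(P_-)\leq d^2/4$ by AM--GM exactly cancels the factor $4$ and yields $d^2\epsilon/(1-\epsilon)^2$. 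The same idea repairs your proof with one line: instead of bounding $\tr(\mathcal{E}_{a|x^*})$ and $\tr(\mathcal{E}_{b|x^*})$ each by $d$, use $\mathcal{E}_{a|x^*}+\mathcal{E}_{b|x^*}\leq\one$, hence $\tr(\mathcal{E}_{a|x^*})+\tr(\mathcal{E}_{b|x^*})\leq d$ and $\mu_i\mu_j\geq(1-\epsilon)^2/\bigl[\tr(\mathcal{E}_{a|x^*})\tr(\mathcal{E}_{b|x^*})\bigr]\geq 4(1-\epsilon)^2/d^2$. Combined with your own bound $\tr(\rho_i\rho_j)\leq 2\epsilon-\epsilon^2$ this gives $|\langle\gamma_i|\gamma_j\rangle|^2\leq d^2(2\epsilon-\epsilon^2)/[4(1-\epsilon)^2]$, which is even stronger than the lemma --- so the missing ingredient is not a delicate matrix-element analysis but simply treating the two effect traces jointly rather than separately.
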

\begin{proof}
Note first that since we consider two different $e_i$ and $e_j$, there exists at least one measurement $x$, for which $e_i$ and $e_j$ give different outcomes, denoted by $a$ and $b$, respectively. Hence, we have
\begin{equation}
\begin{split}
\tr (\mathcal{E}_{a|x} \rho_i)&=\frac{\epsilon}{O},\\
\tr (\mathcal{E}_{b|x} \rho_i)&=1-\epsilon + \frac{\epsilon}{O},\\
\tr (\mathcal{E}_{a|x} \rho_j)&=1-\epsilon + \frac{\epsilon}{O},\\
\tr (\mathcal{E}_{b|x} \rho_j)&=\frac{\epsilon}{O},
\end{split}
\end{equation}
where $\mathcal{E}_{a|x}$ and $\mathcal{E}_{a|x}$ are the corresponding effects. From the above equations one can deduce
\begin{equation}
\begin{split}
\tr(\mathcal{E}_{a|x} (\rho_i-\rho_j)) &= -(1-\epsilon),\\
\tr(\mathcal{E}_{b|x} (\rho_i-\rho_j)) &= (1-\epsilon),
\end{split}
\end{equation}
therefore there exists a decomposition of the identity $\one = P_+ + P_-$ with projectors $P_+$ and $P_-$ satisfying
\begin{equation}
\begin{split}
\tr(P_+ (\rho_i-\rho_j)) &\geq (1-\epsilon),\\
\tr(P_- (\rho_i-\rho_j)) &\leq -(1-\epsilon).
\end{split}
\end{equation}
As $\rho_i$ and $\rho_j$ are both trace one positive semidefinite operators, we get
\begin{equation}
\begin{split}
\tr (P_+ \rho_j) &\leq \epsilon,\\
\tr (P_- \rho_i) &\leq \epsilon.
\end{split}
\label{Seq14}
\end{equation}
The upper bound of the inner product between $\rho_i$ and $\rho_j$ is given by
\begin{equation}
\begin{split}
\tr (\rho_i \rho_j) &= \tr (P_+ \rho_i P_+ \rho_j) + \tr (P_- \rho_i P_- \rho_j)+\tr (P_+ \rho_i P_- \rho_j) + \tr (P_- \rho_i P_+ \rho_j)\\
& \leq 2 \epsilon + \tr (P_- \rho_i P_+  P_+ \rho_j P_-)+\tr(P_+ \rho_i P_-  P_- \rho_j P_+) \leq 4 \epsilon,
\end{split}
\end{equation}
where the first inequality follows from (\ref{Seq14}) and the positivity of $\rho_i,\rho_j$, and the last inequality follows from the Cauchy-Schwarz inequality and the positivity of $\rho_i,\rho_j$. That is
\be
\tr (P_- \rho_i P_+  P_+ \rho_j P_-)\leq\sqrt{\tr(P_-\rho_i P_+ \rho_i )\tr(P_+\rho_j P_-\rho_j)}\leq \epsilon,
\ee
where the second inequality comes from
\begin{equation}
\begin{split}
\tr(P_-\rho_i P_+ \rho_i )&=\tr(P_-\rho_i (\one-P_-) \rho_i )\\
&=\tr (P_- \rho_i^2 )-\tr (P_-\rho_iP_-)\leq \tr (P_-\rho_i).
\end{split}
\end{equation}
Using the spectral decomposition of $\rho_i$ and $\rho_j$, one can rewrite the inequality above as
\begin{equation}
\tr (\rho_i \rho_j) = \sum_{m,n} \mu_m \nu_n |\langle\psi_m|\phi_n\rangle|^2\leq 4\epsilon,
\label{S15}
\end{equation}
where $\{\mu_m, |\psi_m\rangle\}$ and $\{\nu_n, |\phi_n\rangle\}$ are the set of eigenvalues and corresponding eigenvectors of $\rho_i$ and $\rho_j$, respectively. Denoting the largest eigenvalue of $\rho_i$ and $\rho_j$ as $\mu_i=\max_m \mu_m$ and $\nu_j=\max_n \nu_n$, their corresponding eigenvectors as $|\gamma_i\rangle$ and $|\gamma_j\rangle$, we observe that the following inequalities
\begin{equation}
\begin{split}
\mu_i&  \geq \frac{1-\epsilon}{\tr (P_+)},\\
\nu_j&      \geq \frac{1-\epsilon}{\tr (P_-)}=\frac{1-\epsilon}{d-\tr (P_+)}
\end{split}
\end{equation}
always hold. The inequalities are derived directly from inequalities (\ref{Seq14}) and $ P_-+ P_+=\one$. Combining these two inequalities with (\ref{S15}), we can find the overlap between $\gamma_i$ and $\gamma_j$ is upper bounded by
\be
|\langle \gamma_i|\gamma_j\rangle|^2 \leq \frac{4  (d-\tr (P_+))\tr (P_+)}{(1-\epsilon)^2}\leq \frac{d^2\epsilon}{(1-\epsilon)^2},
\ee
which proves the Lemma.
\end{proof}

Lemma \ref{lp} can be also straightforwardly employed in the prepare-and-measure scenario. Moreover, in the proof of Theorem 3 we will use Lemma \ref{lp} in order to show for many cases that the set of correlations has to be non-convex.
\\

\noindent {\bf Theorem 3.} {\it The critical dimension is given by the following formula
\begin{equation}
\mathcal{D}(L,S,O)=\min \{O^S,\frac{(OS)^{L}-1}{OS-1}\}.
\end{equation}
Quantum systems with a dimension that is larger than or equal to the critical dimension generate the correlation polytope $P^L_{S,O}$. Moreover, any correlation space generated by quantum systems with smaller dimension is nonconvex.}
\\

\noindent {\bf Proof.}
 We first consider a specific type of correlations. For them we show that the necessary dimension is given by the critical dimension $\mathcal{D}(L,S,O)$. We will also prove that with the critical dimension it is sufficient to reach all the correlations in the temporal polytope. This allows us to show straightforwardly non-convexity for many instances. We then provide a construction for the remaining cases to prove non-convexity for $d<\mathcal{D}(L,S,O)$. 

We will consider in the following a correlation with all local-in-time probability distributions being of the form given in Lemma \ref{lp}. Additionally, the correlation is chosen to have as many different local-in-time probability distributions as possible. As we will see, in order to produce such a correlation one needs at least a quantum system with dimension $\min \{O^S,\frac{(OS)^{L}-1}{OS-1}\}$. A correlation can have at most $\min \{O^S,\frac{(OS)^{L}-1}{OS-1}\}$ different local probability distributions of this form, since there are $O^S$ possible different tuples $e_i$ and $\frac{(OS)^{L}-1}{OS-1}$ local-in-time distributions. The number of local-in-time distributions of a correlation equals the number of initial and other intermediate states from step $1$ to step $L$. By intermediate state we mean the states that are measured at some point in the sequence. Due to the construction of correlation of the form considered in Lemma \ref{lp}, every outcome would occur in each measurement, which means the the number of intermediate states after one measurement step is $OS$ times more than the intermediate states after the former step. Hence, the 
number of initial state and intermediate states is $\sum_{l=1}^{L}(OS)^{l-1}=\frac{(OS)^{L}-1}{OS-1}$. 

Using Lemma \ref{lp} one obtains that for a $d$-dimensional system to realize such a correlation, the set $\{|\gamma_i\rangle\}$, where $\{|\gamma_i\rangle\}$ are the eigenstates corresponding to the largest eigenvalue of the intermediate state, with cardinality $\min \{O^S,\frac{(OS)^{L}-1}{OS-1}\}$  has to fulfill the pairwise constraints $|\langle \gamma_i|\gamma_j\rangle|^2 \leq \frac{d^2\epsilon}{(1-\epsilon)^2}$. This implies that the states $|\gamma_i\rangle$ are linearly independent if we choose the weight $\epsilon$ to be sufficiently small. We will prove this by contradiction. If $\{|\gamma_i\rangle\}$ is not linearly independent, then $\exists \{\alpha_i \in \mathbb{C}\}, |\Psi\rangle= \sum_i \alpha_i |\gamma_i\rangle=0$. The length of vector $|\Psi\rangle$ can be computed by taking the inner product

\begin{equation}
\begin{split}
\langle\Psi|\Psi\rangle&=\sum_i |\alpha_i|^2 +\sum_{i\neq j} \alpha_i^*\alpha_j \langle \gamma_i |\gamma_j\rangle\\
&\geq \sum_i |\alpha_i|^2 -2\sum_{i< j} |\alpha_i\alpha_j| |\langle \gamma_i |\gamma_j\rangle|\\
&\geq \sum_i |\alpha_i|^2 -2\sum_{i< j} \frac{d\sqrt{\epsilon}}{1-\epsilon} |\alpha_i\alpha_j|
\end{split}
\label{independent}
\end{equation}

From Eq. (\ref{independent}) we can see for $d$-dimensional quantum systems, if $\epsilon< \frac{1}{2}(1+d^2(d-1)^2-d(d-1)\sqrt{4+d^2(d-1)^2})$, then we have
\begin{equation*}
\begin{split}
\langle\Psi|\Psi\rangle&\geq \sum_i |\alpha_i|^2 -2\sum_{i< j} \frac{d\sqrt{\epsilon}}{1-\epsilon} |\alpha_i\alpha_j|\\
&=\frac{1}{d-1}\sum_{i<j}(|a_i|^2-\frac{2d(d-1)\sqrt{\epsilon}}{1-\epsilon} |\alpha_i\alpha_j|+|a_j|^2)\\
&>\frac{1}{d-1}\sum_{i<j}(|a_i|^2-2|a_i a_j|+|a_j|^2)\\
&=\frac{1}{d-1}\sum_{i<j}(|a_i|+|a_j|)^2> 0,
\end{split}
\end{equation*} 
which contradicts the assumption of $|\Psi\rangle =0$ and therefore the vectors have to be linearly independent. However, if $d<\min \{O^S,\frac{(OS)^{L}-1}{OS-1}\}$  there cannot exist $\min \{O^S,\frac{(OS)^{L}-1}{OS-1}\}$ linearly independent vectors in the Hilbert space. Hence, in this case such a correlation cannot be realized.  
With this we have shown that the cardinality of $\{|\gamma_i\rangle\}$, which equals to the cardinality of the deterministic tuple set $\min \{O^S,\frac{(OS)^{L}-1}{OS-1}\}$, is the dimension necessary  to realize every point in the correlation polytope. 

On the other hand, if the dimension of the underlying quantum system is $\min \{O^S,\frac{(OS)^{L}-1}{OS-1}\}$, we can use it to construct protocols that are able to realize an arbitrary point in $P^L_{S,O}$. If the quantum system has dimension $d=O^S\leq\frac{(OS)^{L}-1}{OS-1}$, we can have $O^S$ pure orthogonal quantum states, denoted as $\{|0\rangle, |1\rangle, \ldots, |O^S-1\rangle\}$. Assigning each of the $O^S$ deterministic tuples to one pure state, we can construct the $S$ measurements such that measuring these measurements on state $|i\rangle$ can produce the corresponding tuple. Explicitly, the measurements are constructed as projective measurements with effects $\mathcal{E}_{r|s} = \sum_{\{i:p_i(r|s)=1\}} |i\rangle\langle i|$, here $p_i(r|s)=1$ means that the outcome $r$ will be produced deterministically while the $s$-th measurement is performed on the state $|i\rangle$. Given an arbitrary correlation, we can calculate the local probability distributions at every time step and decompose them as convex combinations of deterministic tuples. By tuning every intermediate state to be a mixture of the according orthogonal quantum states, with the weight of each state equals to the weight of its corresponding deterministic tuple, we can realize all the local probability distributions and thus the correlation itself.  

If the quantum system has dimension $d=\frac{(OS)^{L}-1}{OS-1}\leq O^S$, we can set all the intermediate state as orthogonal pure states, and design the effect of POVM according to the correlations we want to achieve. Taking $L=2$ case as an example, we set the initial state to be $|\psi_0\rangle$, and the state we get after obtaining outcome $a$ for measurement $x$ in the fist time step as $|\psi_{a|x}\rangle.$ With a $(OS+1)$-dimensional quantum system, $\{|\psi_0\rangle,|\psi_{a|x}\rangle\}$ can be chosen as a orthogonal vector set. Any correlation $\{p(ab|xy)\}$ can then be realized by a set of measurements $\{\mathcal{M}_s, s=0,1,\ldots, S-1\}$ whose effects are  $\{\mathcal{E}_{r|s} = p(r|s) |\psi_0\rangle\langle \psi_0| + \sum_{a,x} p(r|axs) |\psi_{a|x}\rangle\langle \psi_{a|x}|,r=0,1,\ldots,O-1\}$.

The remaining part is to prove that the correlation space produced by a quantum system with dimension $d<\mathcal{D}(L,S,O)$ is nonconvex.

We divide the situation into two cases, either one could still realize all the extreme points with the $d$-dimensional system, or one could not.  From the preceding proof, it is obvious that temporal correlation spaces generated by quantum systems with dimension strictly smaller than $\mathcal{D}(L,S,O)$ but still able to realize all the extreme points of $P^L_{S,O}$, are nonconvex. 

Extreme points of $P^L_{S,O}$ are deterministic assignments, the local-in-time probablity distributions of them are tuples, as defined in Eq.~(\ref{tuple}). With a $d$-dimensional quantum system that cannot reach all the extreme points, one can reach any extreme point which has at most $d$ different tuples, while not being able to produce extreme points with $d+1$ different tuples (see also \cite{SBG19}).  Based on this, we construct two extreme points which can be realized with a $d$-dimensional system, but the mixture of them can only be realized by $(d+1)$-dimensional systems. The first extreme point gives result "0" for all the measurements, i.e., the tuple $(0,0,\ldots,0)$ is generated as local-in-time probability distribution in the first time step, then generates exactly $d-1$ different tuples which are not identical with $(0,0,\ldots,0)$ or $(1,1,\ldots,1)$ in the following time steps, and all the remaining local-in-time probability distributions are the tuple $(0,0,\ldots,0)$. The second point gives the tuple $(1,1,\ldots,1)$ whenever the tuple $(0,0,\ldots,0)$ is generated in the first extreme point, while its other local-in-time probability distributions being identical with the first point. This construction always exists for any $d$-dimensional quantum system that can not reach all the extreme points, since every extreme point of $P^L_{S,O}$ has at least $d+1$ local probability distributions, and there exists extreme points with at least $d+1$ different tuples, otherwise all of them can be realized by a $d$-dimensional system. 

 Both the points we consider can be realized by $d$-dimensional quantum systems. The uniform mixture of them, however, needs a quantum system with at least dimension $d+1$ to realize. This can be conceived as follows: the uniform mixture of them has to realize $d+1$ different deterministic tuples as local-in-time probability distributions, the $d+1$ intermediate states that give the tuples are orthogonal to each other (see, e.g. \cite{NiC10} or (\ref{S15}), with $\epsilon=0$). Therefore we need at least $(d+1)$-dimensional quantum system to realize the mixture, which finishes the proof.
 \hfill \qed

\section{Appendix E: Detailed description of the numerical algorithm}

Consider any given polynomial $p(X_1,X_2,\ldots,X_n)$ where the $X_i$ are the involved local probabilities of the form $p(a|x)$ or $p(b|a,xy)$. Since every maximization problem can be converted into a minimization problem, we only present the method for finding the minimum of such a polynomial. To find the minimum of $p(X_1,X_2,\ldots,X_n)$ for a $d$-dimensional quantum system, we can first choose a random number $q$, and check whether $p(X_1,X_2,\ldots,X_n)$ can achieve a value smaller than $q$ with correlations obtained from measuring a $d$-dimensional system. We illustrate this using the $d=2$ case as an example. For a correlation that can be produced by a qubit, its corresponding $(X_1,X_2,\ldots,X_n)$ has a quantum representation $X_i=\tr (\rho_i M_i)$, with $\rho_i$ being the initial or intermediate states and $M_i$ the measurement effects. By assumption, the polynomial is minimized by a correlation with pure states $\rho_i = |\psi_i\rangle\langle\psi_i|$ and projective measurement effects $M_i = |\phi_
i\rangle\langle\phi_i|$. For this correlation we can construct a $2\times 2n$ matrix
\begin{equation}
\Gamma =  \begin{pmatrix}
   |\psi_1\rangle, \ldots,    |\psi_n\rangle,   |\phi_1\rangle,  \ldots,  |\phi_n\rangle|
\end{pmatrix}.
\label{eq5}
\end{equation}
Then, the matrix $\Gamma^\dagger \Gamma$ is a $2n \times 2n$ positive semi-definite matrix with all diagonal entries equal to $1$ and rank 2. Every $X_i =\tr (\rho_i M_i)=|\langle\psi_i|\phi_i\rangle|^2$ is the absolute square of a certain entry. If the minimum of $p(X_1,X_2,\ldots,X_n)$ is smaller than a number $q$, then there should exist a common object in the following two sets of $2n \times 2n$ matrices:

\noindent
($M_1$) Rank two positive semi-definite matrices.

\noindent
($M_2$) Hermitian matrices  with the main diagonal $(1,1,\ldots,1)$, whose entries corresponding to $\{X_i\}$ satisfy the inequality
\begin{equation}
p(X_1,X_2,\ldots,X_n)\leq q.
\end{equation}

To examine the existence of such a matrix, one can iterate between these two sets, as shown in Fig. \ref{fig}. For a given object in $M_1$ one can find the closest object
in $M_2$ and vice versa. Each step of the iteration is analytical. A common object 
exists if the iteration converges, the converse is however not true. 

In more detail, we can first take a random matrix $H_1$ in the first set, and find the closest point on the border of the second set, i.e., find a Hermitian matrix $H_2\in M_2$ which minimizes $||H_1-H_2||_F$, where $||A||_F=\sqrt{\tr AA^\dagger}$ is the Frobenius norm. This can be done using the method of Lagrange multipliers. Since $H_2$ is a square Hermitian matrix, it can be written as $H_2=U_2 V U_2^\dagger$, where $U_2$ is a unitary matrix and $V_2=(v_1,v_2,\ldots,v_{2n})$ is a diagonal matrix with $v_1\geq v_2 \geq \ldots \geq v_{2n}$. Denoting $V_3=(v_1,v_2,0,\ldots,0)$, the matrix closest to $H_2$ in the Frobenius norm in the first set is then $H_3=U_2 V_3 U_2^\dagger$ \cite{HoJ13}. If one matrix is found to be in both sets using this iteration, then $q$ is larger than the minimal value.

The minimum lies in the interval $[q_{min},q_{max}]$, where $q_{min}$ and $q_{max}$ are the algebraic minimum and maximum of $p(X_1,X_2,\ldots,X_n)$, respectively, it can be obtained via binary search. First we examine whether or not a common object of the sets $M_1$ and $M_2$ with $q=(q_{min}+q_{max})/2$ exists. If yes, we keep investigating the middle point of a new interval $[q_{min},(q_{min}+q_{max})/2]$, otherwise we test the middle point of interval $[(q_{min}+q_{max})/2,q_{max}]$, until the length of the interval is smaller than a preset accuracy.

This method can be generalized to quantum systems with $d>2$, where the rank of non-trivial projective measurement effects can have different values. We can calculate all the lower bounds according to  possible measurement effect ranks and then the smallest one is the lower bound of $p(X_1,X_2,\ldots,X_n)$. If a specific measurement effect $M_i$ is of rank $r$, we can choose a set of its eigenvectors ${|\phi_i^1\rangle,|\phi_i^2\rangle,\ldots,|\phi_i^r\rangle}$ and construct $\Gamma$ as

\begin{equation}
\Gamma =  \begin{pmatrix}
   |\psi_1\rangle, \ldots,    |\psi_n\rangle, |\phi_1\rangle, \ldots, |\phi_i^1\rangle, \ldots, |\phi_i^r\rangle,\ldots,  |\phi_n\rangle |
\end{pmatrix}.
\end{equation}
This construction imposes more linear constraints on the second set of matrices, while the diagonal block corresponding to a rank $r$ measurement effect becomes a $r\times r$ identity.

\begin{figure}
\centering     
\includegraphics[width=0.4\columnwidth]{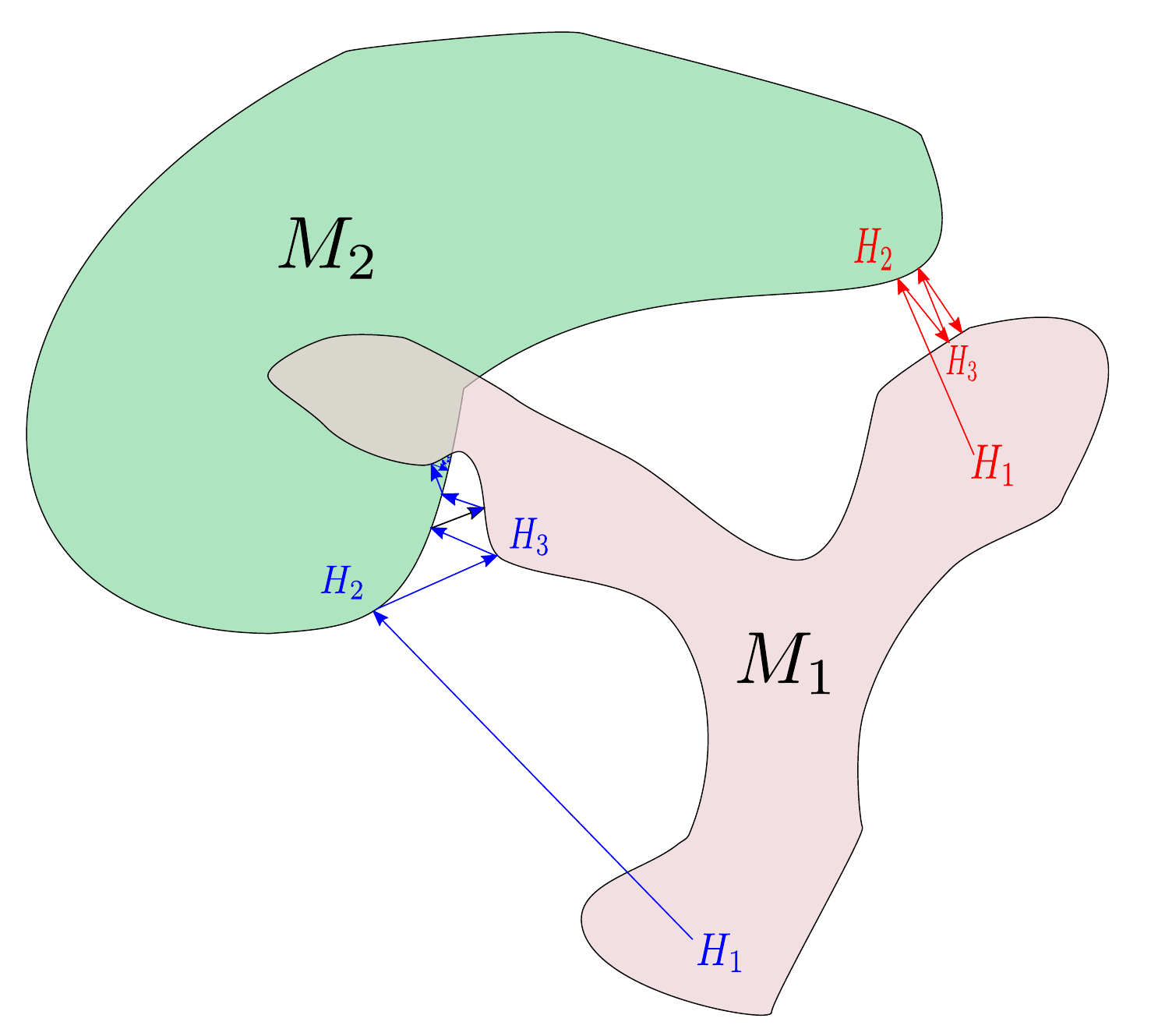}
\caption{Schematic illustration of the algorithm. The blue arrows demonstrate steps of the algorithm which converge to a common point. Note that the algorithm does not necessarily converge even if the two sets have common elements. The path in red exemplifies this case.} 
\label{fig}
\end{figure}

\section{Appendix F: An application of the numerical algorithm} 

Using the normalization $p(0|a,xy)=1-p(1|a,xy)$, the polynomial on the left hand side of inequality (5) in the main text can be rewritten as 

\begin{equation}
\begin{split}
&8-[p(1|0,00)+p(1|0,01)+p(1|0,10)+p(1|1,01)\\
&+p(0|1,10)+p(0|1,11)+p(0|0,11)+p(0|1,00)].
\end{split}
\label{eq3}
\end{equation}
Then the problem of finding the upper bound of the inequality is equivalent to minimizing $p(1|0,00)+p(1|0,01)+p(1|0,10)+p(1|1,01)+p(0|1,10)+p(0|1,11)+p(0|0,11)+p(0|1,00)$. The new polynomial involves four intermediate states $\rho_{i|s}$, with $i,s=0,1$, and four measurement effects $\mathcal{E}_{r|s}$, with $r,s=0,1$. Since we are sure that the minimum lies in interval $[0,8]$, we set $q=4$, the matrix $G=(G_{ij})$ we are looking for is a $8\times 8$ positive semi-definite matrix with all diagonal entries equal to $1$ and the function $p(G)=|G_{16}|^2+|G_{18}|^2+|G_{36}|^2+|G_{28}|^2+|G_{45}|^2+|G_{47}|^2+|G_{37}|^2+|G_{25}|^2\leq 4$. If such a matrix is found, then we know then the minimum of $p(1|0,00)+p(1|0,01)+p(1|0,10)+p(1|1,01)+p(0|1,10)+p(0|1,11)+p(0|0,11)+p(0|1,00)$ is in the interval $[0, 4]$. In order to find such a matrix, we iterate between the following two matrix sets:

(1) Rank two positive semidefinite $8\times 8$ matrices.

(2) Hermitian matrices of the form 
\begin{equation}
G =  \begin{pmatrix}
1 &  &    &   &   & G_{16} &  & G_{18} \\
  &  1 &&&G_{25}&&& G_{28}\\
  
  &&1&&&G_{36}&G_{37}&\\
  &&&1&G_{45}&&G_{47}&\\
  
  &&&&1&&&\\
  &&&&&1&&\\
  
  &&&&&&1&\\
  &&&&&&&1\\
\end{pmatrix}.
\label{eq8}
\end{equation}
Note that gaps in the above matrix represent entries that are not specified beyond the hermiticity condition. Further, the entries of $G$ fulfill the condition $p(G) \leq q.$

For any rank two positive semidefinite $8\times 8$ matrix $H=(H_{ij})$, assume the matrix closest to $H$ on the boundary the second set is the Hermitian matrix $H'=(H'_{ij})$ of the form specified in Eq. (\ref{eq8}). By constructing the Lagrangian function $p(H'-H)-\lambda[p(H')-q]$,
in which $\lambda$ is the Langrange multiplier, and solving the equations $\nabla p(H'-H) = \lambda \nabla p(H')$ and $p(H')=q$, we obtain the explicit expression $H'_{ij} = H_{ij}/p(H), \forall i \neq j$. If the iteration converges, then $q$ is larger than the minimum we are looking for. In this case we update our knowledge and search in the new interval $[q_{min},q]$. Using this method we can find the upper bound of inequality (5) for a qubit numerically.

Due to the symmetry of the polynomial, we can choose $\mathcal{E}_{0|s}, s=0,1$ to be rank one and $\mathcal{E}_{1|s}, s=0,1$ to be rank two for the qutrit case. The first set of matrices is then consisting matrices of the form

\begin{equation}
G =  \begin{pmatrix}
1 &     &    &   &       & G_{16}^{(1)} &G_{16}^{(2)}  & & G_{18}^{(1)}& G_{18}^{(2)} \\
   &  1 &    &   &G_{25} &           &           &  &G_{28}^{(1)}&G_{28}^{(2)}\\ 
  
  &&1&&&G_{36}^{(1)}&G_{36}^{(2)}&G_{37}& &\\
  &&&1&G_{45}&&&G_{47}&&\\
  
  &&&&1&&&\\
  &&&&&1&0&\\
  &&&&&&1&\\
  &&&&&&&1&\\
  &&&&&&&&1&0\\
  &&&&&&&&&1\\
\end{pmatrix},
\end{equation}
fulfilling $p(G) \leq q.$
Here $G_{ij}^{(1)}$ and $G_{ij}^{(2)}$ denote the inner products of the state vectors and two orthonormal eigenvectors of measurement effects $\mathcal{E}_{1|s}$, respectively, and we define $|G_{ij}|^2=|G_{ij}^{(1)}|^2+|G_{ij}^{(2)}|^2, \text{for}~ij\in \{16,18,28,36\}$. Other parts of the algorithm are the same as for a qubit.

\vspace{0.9cm}
\twocolumngrid
\vspace{0.9cm}

\end{document}